\newtheorem{definition}{Definition}
\newtheorem{theorem}{Theorem}
\newcommand{\eref}[1]{(\ref{#1})}
\newcommand{\sref}[1]{Section~\ref{#1}}
\newcommand{\fref}[1]{Figure~\ref{#1}}
\newcommand{\cref}[1]{Constraint~\ref{#1}}
\newcommand{\ignore}[1]{}
\begin{document}

\title{Decoding Delay Controlled Reduction of Completion Time in Instantly Decodable Network Coding}

\author{
   \authorblockN{Ahmed Douik, \textit{Student Member, IEEE}, Sameh Sorour, \textit{Member, IEEE},\\ Tareq Y. Al-Naffouri, \textit{Member, IEEE}, and Mohamed-Slim Alouini, \textit{Fellow, IEEE}}
   {\thanks {Ahmed Douik and Mohamed-Slim Alouini are with Computer, Electrical and Mathematical Sciences and Engineering (CEMSE) Division at King Abdullah University of Science and Technology (KAUST), Thuwal, Makkah Province, Saudi Arabia, email: \{ahmed.douik,slim.alouini\}@kaust.edu.sa

Sameh Sorour is with the Electrical Engineering Department, King Fahd University of Petroleum and Minerals (KFUPM), Dhahran, Eastern Province, Saudi Arabia, email:samehsorour@kfupm.edu.sa

Tareq Y. Al-Naffouri is with both the CEMSE Division at King Abdullah University of Science and Technology (KAUST), Thuwal, Makkah Province, Saudi Arabia, and the Electrial Engineering Department at King Fahd University of Petroleum and Minerals (KFUPM), Dhahran, Eastern Province, Saudi Arabia, e-mail: tareq.alnaffouri@kaust.edu.sa.

This work is an extended version of work \cite{confarxiv} submitted to Globecom, Austin, Texas, USA, 2014.
}}
    }

\maketitle

\IEEEoverridecommandlockouts

\begin{abstract}
For several years, the completion time and the decoding delay problems in Instantly Decodable Network Coding (IDNC) were considered separately and were thought to completely act against each other. Recently, some works aimed to balance the effects of these two important IDNC metrics but none of them studied a further optimization of one by controlling the other. In this paper, we study the effect of controlling the decoding delay to reduce the completion time below its currently best known solution in persistent erasure channels. We first derive the decoding-delay-dependent expressions of the users' and overall completion times. Although using such expressions to find the optimal overall completion time is NP-hard, we design two novel heuristics that minimizes the probability of increasing the maximum of these decoding-delay-dependent completion time expressions after each transmission through a layered control of their decoding delays. We, then, extend our study to the limited feedback scenario. Simulation results show that our new algorithms achieves both a lower mean completion time and mean decoding delay compared to the best known heuristic for completion time reduction. The gap in performance becomes significant for harsh erasure scenarios.
\end{abstract}

\begin{keywords}
Instantly decodable network coding; Minimum completion time, Decoding delay.
\end{keywords}

\section{Introduction} \label{sec:intro}

\emph{Network Coding (NC)} gained much attention in the past decade after its first introduction in the seminal paper \cite{850663}. In the last lustrum, an important subclass network coding, namely the Instantly Decodable Network Coding (IDNC) was an intensive subject of research \cite{ref2,6655395,letterarxiv,ref3,6766433,ref4,xiao1,
ref5,6725590,ref6,6120247,ref8,xiao2,6620795,ref17,ref18,arg2,refsameh,5753573,refahmed,refjournal} thanks to its several benefits, such as the use of simple binary XOR to encode/decode packets, no buffer requirement, and fast progressive decoding of packets. which is much favorable in many applications (e.g. roadside to vehicle safety messages, satellite networks and IPTV) compared to the long buffering time needed in other NC approaches before decoding.

For as long as the research on IDNC has existed, there were two main metrics that were considered in the literature as measures of its quality, namely the \emph{completion time} \cite{ref4} and the \emph{decoding delay} \cite{ref2}. The former measures how fast the sender can complete the delivery/recovery of requested packets whereas the latter measures how far the sender is from being able to serve all the unsatisfied receivers in each and every transmission. For quite some time, these two metrics were considered for optimization separately in many works. Though both proved to be NP-hard parameters to minimize, many heuristics has been developed to solve them in many scenarios \cite{ref4,ref2,ref3,ref5,refjournal,refsameh,refahmed}, but again separately. In fact, it can be easily inferred from \cite{ref4} and \cite{ref2} that the policies derived so far to optimize one usually degrades the other.

It was not until very recently until one work has aimed to derive a policy that can balance between these two metric and achieve an intermediate performance for both of them \cite{nada}. Nonetheless, there does not exist, to the best of our knowledge, any work that aims to explore how these two metrics can be controlled together in order to achieve an even better performance than the currently best known solutions. For instance, every time an unsatisfied user receives a coded packet that is not targeting it, its decoding delay increases and so does its individual completion time. Although this fact was noted for erasure-less transmissions in \cite{nada}, it was used to strike a balance in performance between both metrics and not to investigate whether a smart control of such decoding delay effects will further reduce the overall completion time compared to its current best achievable performance.

In wireless networks, packet loss occurs due to many phenomena related to the mobility and the propagation environment and they are seen as packet erasures at higher communication layers \cite{1208721}. This erasure nature of the links affects the delivery of meaningful data and thus affects the ability of users to synchronously decode the information flow. As a consequence a better use of the channel and network does not mean an effective better throughput at higher communication layers \cite{1208721}. Numerous research has been dedicated to understand the different delay aspects in NC. In our previous work \cite{confarxiv}, we considered a prompt and perfect reception of the feedback. This assumption is not realistic in practice because of the impairments in the feedback channel. In this paper, we aim to extend the work in \cite{confarxiv} by studying the completion time reduction problem of G-IDNC in the persistent erasure channel (PEC) model on both the forward and backward channels, and in the presence of lossy feedback intermittence.

In this paper, we aim to design a new completion time reduction algorithm through decoding delay control in persistent erasure channels. We first derive more general expressions of the individual and overall completion times over erasure channels as a function of the users' decoding delays. Since finding the optimal schedule of coded packets to minimize the overall completion time is NP-hard \cite{arg1}, we design two greedy heuristic that aims to minimize the probability of increasing the maximum of these decoding-delay-dependent completion time expressions after each transmission. In the first heuristic, we show that this process can be done by partitioning the \emph{IDNC graph} into layers with descending order of user completion time criticality before each transmission. The coding combination for this transmission is then designed by going through these descending order layers sequentially and selecting the combination that minimizes the probability of any decoding delay increments within each layer. This is done while maintaining the instant decodability constraint of the overall coding combination for the targeted users in the more critical layer(s). In the second heuristic, we use a binary optimization algorithm with multi-layer objective function. We, then, extend our study to the limited feedback environment. Finally, we compare through simulations the performance of our designed algorithm to the best known completion time and decoding delay reduction algorithms.

The rest of this paper is divided as follows: In \sref{sec:sys}, we present our system model and in \sref{sec:ch} we present the channel and feedback model. The problem formulation using the decoding delay is presented in \sref{sec:formulation}. Analytic formulation of the sub optimal solution at each transmission is provided in \sref{sec:comp}. In \sref{sec:alg}, algorithms to solve the former problem are presented. \sref{sec:ext} presents the extension of the study to the limited feedback scenario. Before concluding in \sref{sec:conclusion}, simulations results are illustrated in \sref{sec:sim}.

\section{System Model and Parameters} \label{sec:sys}

The model we consider in this paper consists of a wireless sender that is required to deliver a frame (denoted by $\mathcal{N}$) of $N$ source packets to a set (denoted by $\mathcal{M}$) of $M$ users. Each user is interested in receiving the $N$ packets of $\mathcal{N}$. In an \emph{initial phase}, The sender transmits the $N$ packets of the frame uncoded. Each user listens to all transmitted packets and feeds back to the sender an acknowledgement for each successfully received packet.

After the \emph{initial phase}, two sets of packets are attributed to each user $i$ at the sender:
\begin{itemize}
\item The \emph{Has} set (denoted by $\mathcal{H}_i(t)$) is defined as the set of packets successfully received by user $i$.
\item The \emph{Wants} set (denoted by $\mathcal{W}_i(t)$) is defined as the set of packets that are lost by user $i$. In other words,we have $\mathcal{W}_i = \mathcal{N} \setminus \mathcal{H}_i$.
\end{itemize}

The BS saves the information obtained after the transmission at time $(t-1)$ in a \emph{feedback matrix} (FM) $\mathbf{F}(t) = [f_{ij}(t)],~ \forall~ i \in \mathcal{M},~ \forall~j \in \mathcal{N},~ \forall~t > 0$ such that:
\begin{align}
f_{ij}(t) =
\begin{cases}
0 \hspace{0.9 cm}& \text{if } j \in \mathcal{H}_i(t) \\
1 \hspace{0.9 cm}& \text{if } j \in \mathcal{W}_i(t).
\end{cases}
\end{align}

For ease of notation, we will assume that the time index $t$ denotes the transmission number within the recovery phase and thus $t=0$ refers to its beginning. Therefore, the sets $\mathcal{H}_i(0), \mathcal{W}_i(0)$ and $\mathcal{U}_i(0)$ refers to the sets at the beginning of the recovery phase (i.e. the sets obtained after the initial transmissions). After this initial transmission, the recovery phase starts at time $t=1$. In this phase, the BS uses binary XOR to encode the source packets to be send. The packet combination is chosen using the information available in the FM and the expected erasure patterns of the links. In this phase, the transmitted coded packets can be one of the following three options for each user $i$:

\begin{itemize}
\item \emph{Non-innovative:} A packet is non-innovative for user $i$ if it does not bring new useful information. In other words, all the source packets encoded in it were successfully received and decoded previously.
\item \emph{Instantly Decodable:} A packet is instantly decodable for user $i$ if the encoded packet contain at most one packet the user do not have so far. In other words, it contains \emph{only one packet} from $\mathcal{W}_i$.
\item \emph{Non-Instantly Decodable:} A packet is non instantly decodable for user $i$ if it contains more than one source packet the user do not have so far. In other words, it contains at least two packets from $\mathcal{W}_i$.
\end{itemize}

After the \emph{initial phase}, the \emph{recovery phase} begins. In this phase, the sender exploits the diversity in received packets at the different users to transmit network coded combinations of the source packets. After each transmission, users update the sender in case they receive the coded packet and decode missing source packets from it. This process is repeated until all users complete the reception of all frame packets. 

We define, like in \cite{letterarxiv}, the targeted users by a coded packet (or a transmission) as the users to whom the BS indented the packet combination when encoding it. Given a schedule $S$ of coded packets transmitted by the sender, we define the individual completion time, overall completion time and the decoding delay, like in \cite{refsameh,ref2,ref5,ref6,nada}, as follows:
\begin{definition}
The individual completion time $\mathcal{S}_i(S)$ of user $i$ is the number of recovery transmissions required until this user obtained all its requested packets.
\end{definition}
\begin{definition}
The overall completion time $\mathcal{S}(S)$ of a frame is the number of recovery transmissions required until all users obtain all their requested packets. It easy to infer that $\mathcal{S}(S) = \max_{i\in\mathcal{M}} \mathcal{S}_i(S)$.
\end{definition}
\begin{definition}
At any recovery phase transmission a time $t$, a user $i$, with non-empty Wants set, experiences a one unit increase of decoding delay if it successfully receives a packet that is either non-innovative or non-instantly decodable. Consequently, the decoding delay $D_i(S)$ experienced by user $i$ given a schedule $S$ is the number of received coded packets by $i$ before its individual completion, which are non-innovative or non-instantly decodable.
\end{definition}

\section{Forward Channel Model} \label{sec:ch}

Following the same model as \cite{ref3,ref17,refjournal}, the forward channel (from the BS to the users) is modeled as a Gilbert-Elliott channel (GEC) \cite{4607216}. The good and bad states are designed by G and B, respectively. In the original version of the Gilbert-Elliott channel \cite{BLTJ:BLTJ955}, the good state was assumed to be error free and the bad state always in error (i.e. $p=0$ and $q=1$ in \fref{fig:GEC}). This zero error probability in the good state allowed the computation in close form of the capacity \cite{45284}. This model was then extended to incorporate non zero error probability in the good state. This formulation allow the modeling of multiple fading scenarios \cite{4607216}. In this paper, this extended channel form will be used (see \fref{fig:GEC}).
\begin{figure}[t]
\centering
  \includegraphics[width=1\linewidth]{./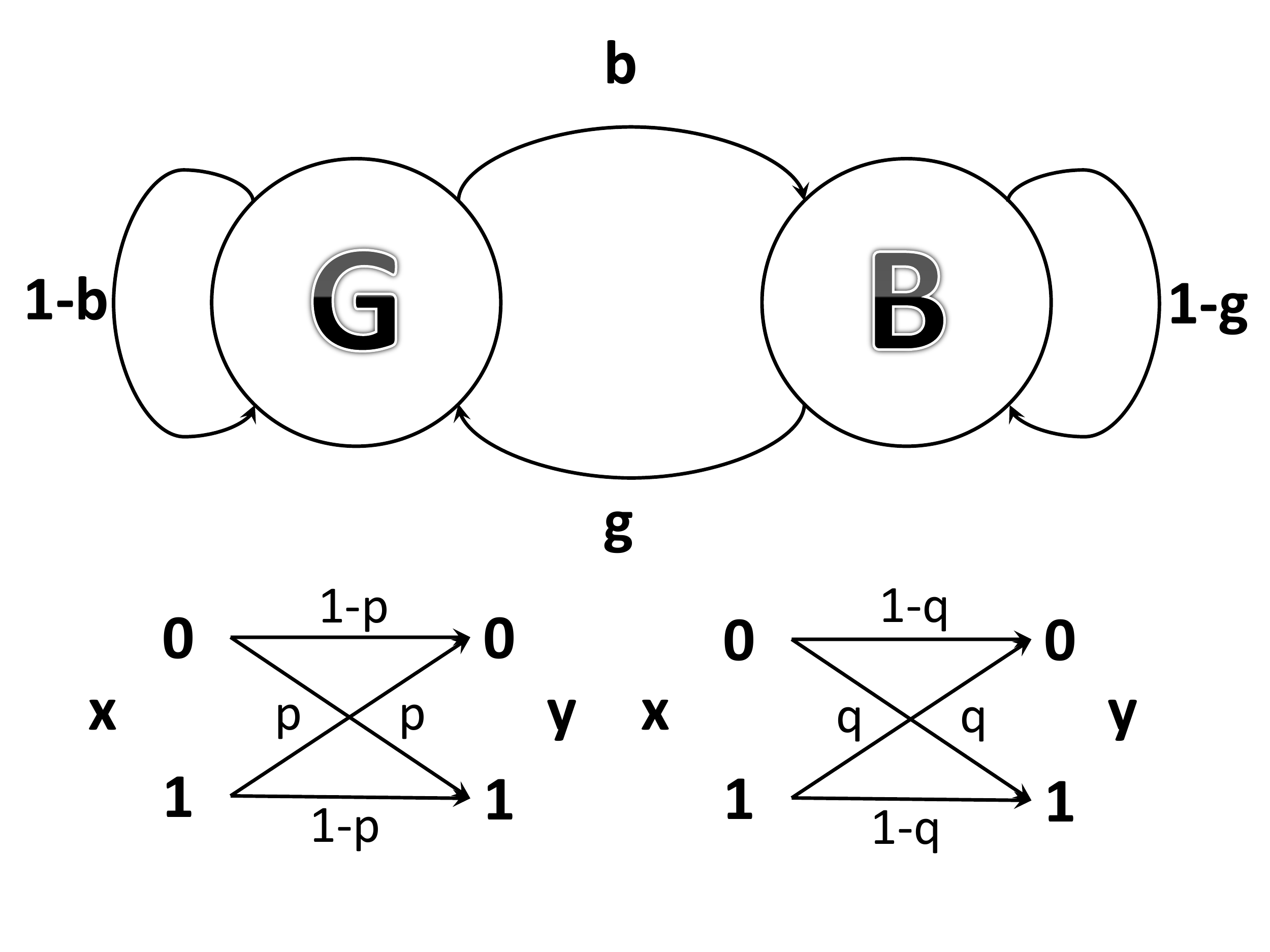}\\
  \caption{The two state Gilbert-Elliott channel.}\label{fig:GEC}
\end{figure}  

Due to the persistent erasure of the Markov chain, the channel has memory that depends on the transition probabilities between the good and bad states. Let $C_i^f$ denote the state of the forward channels. For each user $i$, the transition probability, from time $t-1$ to $t$, are:
\begin{align}
&\mathds{P} (C_i^f(t) = G | C_i^f(t-1) = B) = g_i^f \nonumber \\
&\mathds{P} (C_i^f(t) = B | C_i^f(t-1) = B) = 1-g_i^f \nonumber \\
&\mathds{P} (C_i^f(t) = B | C_i^f(t-1) = G) = b_i^f \nonumber \\
&\mathds{P} (C_i^f(t) = G | C_i^f(t-1) = G) = 1-b_i^f,
\end{align}
where the superscript $f$ indicates the transmission (forward) channel. The use of a superscript ($f$ herein) will be motivated in \sref{sec:ext} when studying the limited feedback scenario. The probabilities, for user $i$, for a packet to be erased in the good and bad state are respectively $p_i^f$ and $q_i^f$.

Since this Markov chain is time-homogeneous (the process can be described by a single, time-independent matrix), we express the probabilities to be in Good or Bad state (steady-state probabilities) for both the transmission and feedback channel, as:
\begin{align}
&\mathcal{P}_{G_i^f} = \mathds{P} (C_i^f = G) = \cfrac{g_i^f}{g_i^f+b_i^f} \nonumber \\ 
&\mathcal{P}_{B_i^f} = \mathds{P} (C_i^f = B) = \cfrac{b_i^f}{g_i^f+b_i^f},
\end{align}

We define the memory factor as an indicator of the correlation between the states at different times. A high value of this factor means that the channel is likely to stay in the same state during the following transmissions. In contrast, a small value indicates that the state of the channel changes in an independent manner. For each user $i$, the memory factor of the forward channel $\mu_i$ can be formulated as:
\begin{align}
\mu_i = 1 - g_i^f - b_i^f 
\end{align}

The persistent erasure channel is more likely to stay in the state during the next transmission than switching states. Therefore, we have $0 \leq \mu_i \leq 1,~ \forall~i \in \mathcal{M}$. Let $\mu = \cfrac{\sum\limits_{i \in \mathcal{M}} \mu_i }{ M}$ be the average memory for the forward link. The memory-less channels are a special case of this persistent erasure channel that can be obtained by setting the memory factor for each user to $0$. In other words, by setting $\mu=0$.

We assume that each user is seen through a channel that is independent from all the other users and thus no correlation exists between the different channels. The state transition probabilities of all the users are know by the BS and used when choosing the packet to be encoded. 

\section{Problem Formulation using Decoding-Delay-Dependent Expressions} \label{sec:formulation}

Let $d_i(\kappa(t))$ be the decoding delay increase for user $i$, at time $t$, after the transmission $\kappa(t)$. Define $D_i(t)$ as the total decoding delay experienced by user $i$ until the transmission at time $t$ (i.e. $D_i(t) = \sum\limits_{n=1}^t d_i(\kappa(t))$). Let $\mathcal{C}_i(S)$ denote the completion time for user $i$ given a certain schedule $S$ of coded packets. In other words, $\mathcal{C}_i$ is the total number of transmissions required, since the beginning of the \emph{recovery phase}, so that user $i$ successfully receives all its missing packets.

The completion time for the whole session, denoted by $\mathcal{C}$, is the required time, from the beginning of the \emph{recovery phase}, to serve all users. Therefore, $\mathcal{C}$ is controlled by the largest user completion time. In other words, we have:
\begin{align}
\mathcal{C} = \underset{i \in \mathcal{M}}{\text{max }} \mathcal{C}_i.
\end{align}

The following theorem introduces a decoding-delay-dependent expression for the individual completion time of user $i$ and the overall completion time, given the transmission of schedule $S$ from the sender over erasure channels.
\begin{theorem}
For a relatively large number of packets $N$, and a schedule $S$ of transmitted packets transmitted by the sender until overall completion occurs to all users, the individual completion time for user $i$ can be approximated by:
\begin{equation}\label{eq:ICT}
\mathcal{C}_i(S) \approx \frac{\left|\mathcal{W}_i(0)\right| + D_i(S)-\alpha_i}{1-\alpha_i}
\end{equation}
where
\begin{align}
\alpha_i = \cfrac{g_i^fp_i^f+q_i^fb_i^f}{g_i^f+b_i^f}
\end{align}
Consequently, the overall completion time for the same schedule $S$ can be expressed as:
\begin{equation}
\mathcal{C}(S) \approx \max_{i\in\mathcal{M}}\left\{\frac{\left|\mathcal{W}_i(0)\right| + D_i(S)-\alpha_i}{1-\alpha_i}\right\}
\end{equation}
\end{theorem}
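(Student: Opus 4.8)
The plan is to track, over the recovery phase of user $i$, exactly what happens at each of the $\mathcal{C}_i(S)$ transmissions it witnesses before completing. First I would partition these transmissions into three mutually exclusive and exhaustive classes: (i) transmissions erased on the forward link of user $i$; (ii) transmissions received and useful, i.e. innovative and instantly decodable for $i$; and (iii) transmissions received but either non-innovative or non-instantly decodable. By the definition of the individual completion time, class (ii) must contain exactly $\left|\mathcal{W}_i(0)\right|$ transmissions, since $i$ recovers each of its initially wanted packets precisely once and each useful reception empties one slot of its Wants set; and by the definition of the decoding delay, class (iii) contains exactly $D_i(S)$ transmissions. Writing $E_i$ for the number of class-(i) transmissions, this bookkeeping yields the exact identity
\[
\mathcal{C}_i(S) = E_i + \left|\mathcal{W}_i(0)\right| + D_i(S).
\]

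The key step is to estimate $E_i$. The erasure indicator of user $i$'s forward link is governed by the Gilbert--Elliott chain, whose stationary erasure probability is exactly
\[
\mathcal{P}_{G_i^f} p_i^f + \mathcal{P}_{B_i^f} q_i^f = \frac{g_i^f p_i^f + q_i^f b_i^f}{g_i^f + b_i^f} = \alpha_i.
\]
Because the physical-layer erasure process is independent of how the BS classifies each coded packet for user $i$ into class (ii) or (iii), I would invoke the ergodic theorem (equivalently, the law of large numbers for a time-homogeneous two-state Markov chain): over a long horizon the empirical fraction of erased transmissions converges to $\alpha_i$. The hypothesis that $N$ is large enters precisely here, since a large frame forces $\mathcal{C}_i(S)$ to be large and makes this stationary approximation accurate; this is the content of the ``$\approx$'' in the statement. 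Hence $E_i \approx \alpha_i \times (\text{number of transmissions seen by } i)$.

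To recover the exact numerator $\left|\mathcal{W}_i(0)\right| + D_i(S) - \alpha_i$ rather than $\left|\mathcal{W}_i(0)\right| + D_i(S)$, I would treat the final transmission separately. Completion occurs \emph{at} the instant user $i$ receives its last useful packet, so the transmission at time $\mathcal{C}_i(S)$ is necessarily a successful class-(ii) reception and cannot be erased. Thus the erasures are distributed over the first $\mathcal{C}_i(S)-1$ transmissions only, giving $E_i \approx \alpha_i\bigl(\mathcal{C}_i(S)-1\bigr)$. Substituting into the identity above produces the linear relation $\mathcal{C}_i(S) \approx \alpha_i\bigl(\mathcal{C}_i(S)-1\bigr) + \left|\mathcal{W}_i(0)\right| + D_i(S)$, which I would solve for $\mathcal{C}_i(S)$ to obtain \eref{eq:ICT}. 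The overall expression then follows immediately by substituting this formula into the already-established identity $\mathcal{C}(S) = \max_{i \in \mathcal{M}} \mathcal{C}_i(S)$.

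The main obstacle is the rigorous justification of the erasure estimate $E_i \approx \alpha_i\bigl(\mathcal{C}_i(S)-1\bigr)$. Two points need care: first, that the per-transmission erasure events sit in the chain's stationary regime, which is reasonable once the recovery phase is long but remains an approximation near its onset; and second, that the erasure process is genuinely independent of the scheduling-induced labelling of packets into classes (ii) and (iii), so that the stationary rate $\alpha_i$ applies to the entire horizon rather than to a biased subset of transmissions. Both points are exactly where the ``relatively large $N$'' assumption and the approximate nature of the claim do their work; everything else reduces to the elementary counting identity and a one-line algebraic solve.
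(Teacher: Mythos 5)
Your proposal is correct and follows essentially the same route as the paper's Appendix B: the same three-way classification of transmissions yielding the identity $\mathcal{C}_i(S) = E_i + \left|\mathcal{W}_i(0)\right| + D_i(S)$, the same law-of-large-numbers estimate $E_i \approx \alpha_i\bigl(\mathcal{C}_i(S)-1\bigr)$ at the chain's stationary erasure rate (the paper merely splits this into time-in-state counts times per-state erasure rates), and the same algebraic solve and final maximization. The only cosmetic difference is that you invoke the ergodic theorem for the chain directly, whereas the paper derives the same estimate in two steps via the stationary occupation fractions $\mathcal{P}_{G_i^f}$ and $\mathcal{P}_{B_i^f}$.
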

\begin{proof}
The proof can be found in Appendix B.
\end{proof}
In the rest of the paper, we will use the approximation with equality as it indeed holds for large $N$.
We can thus formulate the minimum completion time problem as finding the schedule of coded packet $S^*$, such that:
\begin{align}\label{eq:opt}
S^* &= \arg\min_{S\in\mathcal{S}} \left\{\mathcal{C}(S)\right\} \nonumber \\ &= \arg\min_{S\in\mathcal{S}}\left\{ \max_{i\in\mathcal{M}}\left\{ \frac{|\mathcal{W}_i(0)| + D_i(S)-\alpha_i}{1-\alpha_i} \right\}\right\}\;,
\end{align}
where $\mathcal{S}$ is the set of all possible transmission schedules of coded packets.

Clearly, finding this optimal schedule at time $t=0$ through the above optimization formulation is very difficult. This is true due to the dynamic nature of erasures and the dependence of the optimal schedule of their effect, which makes the above equations anti-causal (i.e. current result depends on input from the future). Moreover, we know from the literature that optimizing the completion time over the whole \emph{recovery phase} is intractable \cite{refsameh} even for the erasure-less scenario \cite{nada}. On the other hand, this formulation shows that the only terms affected by the schedule in the individual and overall completion time expressions are the decoding delay terms of the different users. Consequently, controlling such decoding delays in a smart way throughout the selection of the coded packet schedule can indeed affect the reduction of the completion time significantly. We will thus design a new heuristic algorithm in the next section that takes this fact into consideration. In the rest on this paper, we will refer to the best packet combination that reduce the completion time at each time step as the optimal solution.

\section{Completion Time Reduction}\label{sec:comp}

\subsection{Critical Criterion}

From \eref{eq:opt}, we can see that the optimal schedule is the one that achieves the minimum overall growth in the individual completion time expressions in \eref{eq:ICT} $\forall~i\in\mathcal{M}$. Since we know that finding such schedule for the entire recovery phase, prior to its start, is intractable, we will design our heuristic algorithm such that, in each transmission a time $t>0$, it minimizes the probability of increase of the maximum of such expressions over all users compared to their state before this transmission. To formally express this criterion, let us first define $\mathcal{C}_i(t)$ as:
\begin{equation} \label{eq:Ct}
\mathcal{C}_i(t) = \frac{|\mathcal{W}_i(0)| + D_i(t) - \alpha_i}{1-\alpha_i}
\end{equation}
In other words, $C_i(t)$ is the anticipated individual completion time of user $i$ if it experiences no further decoding delay increments starting from time $t$. Thus, the philosophy of our proposed heuristic algorithm is to transmit the coded packet $\kappa(t)$ at time $t$ such that:
\begin{align}\label{eq:heuristic-criterion}
\kappa^{*}&(t) \nonumber \\
&= \arg\min_{\kappa(t)\in\mathcal{G}(t)} \left\{\mathds{P}\left(\max_{i\in\mathcal{M}} \left\{\mathcal{C}_i(t)\right\} > \max_{i\in\mathcal{M}} \left\{\mathcal{C}_i(t-1)\right\}\right) \right\}
\end{align}
We will refer to \eref{eq:heuristic-criterion} as the critical criterion. Let $\mathcal{P}(t)$ be the set of users that can potentially increase $\max_{i\in\mathcal{M}} \left\{\mathcal{C}_i(t)\right\}$ at time $t$ compared to $\max_{i\in\mathcal{M}} \left\{\mathcal{C}_i(t-1)\right\}$ if they are not targeted by $\kappa(t)$. The set can be mathematically defined as follows:
\begin{align}\label{eq:critical-set}
\mathcal{P}(t) = \Bigg\{i\in\mathcal{M} \;\Biggm|\; &\frac{|\mathcal{W}_i(0)| + \left(D_i(t-1)+1\right) - \alpha_i}{1-\alpha_i} \nonumber\\
 & \qquad >  \frac{|\mathcal{W}_j(0)| + D_j(t-1)-\alpha_j}{1-\alpha_j}\Bigg\}\;,
\end{align}
where $j =   \arg\max_{k \in \mathcal{M}} \left\{ \frac{|\mathcal{W}_k(0)| + D_k(t-1) - \alpha_k}{1-\alpha_k}\right\}$. We will refer to this set as the "highly critical set".

\subsection{Optimization Problem}

Let $e_i(t)$ be the probability that user $i$ loose the transmission at time $t$. This formulation of this probability is:
\begin{align}
e_i(t) = 
\begin{cases}
p_i \hspace{2cm} \text{ if } C_i^f(t)=G\\
q_i \hspace{2cm} \text{ otherwise } 
\end{cases}
\end{align}

Define $\tau(\kappa(t))$ as the set of users that are targeted by the transmission $\kappa(t)$. The following theorem defines a maximum weight clique algorithm that can satisfy the critical criterion.
\begin{theorem}\label{th:critical-criterion}
The critical criterion in \eref{eq:heuristic-criterion} can be achieved by selecting $\kappa^*(t)$ according to the following optimization problem:
\begin{align}\label{eq:criterion-optimization}
&\kappa^{*}(t) = \underset{\kappa(t) \in \mathcal{G}}{\text{argmax}} \left\{ \sum_{i \in (\mathcal{P}(t) \cap \tau ) } \text{log}\left( \cfrac{1}{ e_i(t)}\right) \right\}.
\end{align}
In other words, the transmission $\kappa(t)$ than can satisfy the critical criterion can be selected using a maximum weight clique problem in which the weight of each vertex $v_{ij}$ in $\mathcal{P}(t)$ can be expressed as:
\begin{align}\label{eq:weights}
w_{ij}^* = \text{log}\left( \cfrac{1}{ e_i(t)}\right).
\end{align}
\end{theorem}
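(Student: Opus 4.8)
The plan is to turn the probabilistic criterion \eref{eq:heuristic-criterion} into a deterministic product of erasure probabilities, and then into a weighted objective through a monotone transformation. The starting point is that each $\mathcal{C}_i(t)$ in \eref{eq:Ct} is nondecreasing in $t$: the quantities $|\mathcal{W}_i(0)|$ and $\alpha_i$ are fixed while $D_i(t)$ can only grow, so $\max_{i\in\mathcal{M}}\mathcal{C}_i(t)$ is nondecreasing as well. Hence the event inside \eref{eq:heuristic-criterion} is simply a \emph{strict} increase of this maximum after the transmission $\kappa(t)$, and minimizing its probability is the whole task.

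First I would pin down exactly when this strict increase happens. Since a transmission delivers a single coded packet, each user receives at most one packet at time $t$, so $D_i$ grows by at most one unit; and by the decoding-delay definition this unit increment occurs only for a user that still has a non-empty Wants set, is \emph{not} in $\tau(\kappa(t))$, and successfully receives $\kappa(t)$ (a targeted user decodes an instantly decodable packet and incurs no delay, whereas a non-targeted user receives a combination that is non-innovative or non-instantly decodable, so reception alone triggers the increment). The test of whether a single unit increment for user $i$ would push its anticipated completion time above the current maximum is precisely the membership condition defining the highly critical set $\mathcal{P}(t)$ in \eref{eq:critical-set}. Consequently, $\max_{i}\mathcal{C}_i(t)>\max_{i}\mathcal{C}_i(t-1)$ if and only if at least one user in $\mathcal{P}(t)\setminus\tau(\kappa(t))$ receives $\kappa(t)$.

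With this characterization, the independence of the per-user channels gives the probability that every user in $\mathcal{P}(t)\setminus\tau$ fails to receive $\kappa(t)$ as $\prod_{i\in\mathcal{P}(t)\setminus\tau}e_i(t)$, so
\begin{align}
\kappa^*(t) &= \arg\min_{\kappa(t)\in\mathcal{G}}\left\{1-\prod_{i\in\mathcal{P}(t)\setminus\tau}e_i(t)\right\} \nonumber\\
&= \arg\max_{\kappa(t)\in\mathcal{G}}\left\{\prod_{i\in\mathcal{P}(t)\setminus\tau}e_i(t)\right\}.
\end{align}
Splitting $\mathcal{P}(t)\setminus\tau$ as $\mathcal{P}(t)$ minus $\mathcal{P}(t)\cap\tau$ factors out the constant $\prod_{i\in\mathcal{P}(t)}e_i(t)$, so the objective is equivalent to maximizing $\prod_{i\in\mathcal{P}(t)\cap\tau}1/e_i(t)$; applying the monotone logarithm converts this product into the sum $\sum_{i\in\mathcal{P}(t)\cap\tau}\log(1/e_i(t))$ of \eref{eq:criterion-optimization}. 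Finally, because every instantly decodable combination is a clique of the IDNC graph $\mathcal{G}$ whose vertices $v_{ij}$ are exactly the targeted (user, packet) pairs, assigning the weight $w_{ij}^*=\log(1/e_i(t))$ to each vertex with $i\in\mathcal{P}(t)$ (and weight zero otherwise) recasts the maximization as a maximum weight clique problem, as claimed.

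The step I expect to be the main obstacle is the characterization in the second paragraph: one must argue rigorously that the maximum can move only through non-targeted highly-critical users, ruling out both targeted users (who decode and incur no increment) and users outside $\mathcal{P}(t)$ (whose anticipated completion time after a single increment stays at or below the current maximum, by the very definition \eref{eq:critical-set}). Once that equivalence is secured, the product form from channel independence and the logarithmic change of variables are routine.
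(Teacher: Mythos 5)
Your proof is correct and takes essentially the same route as the paper's: the paper (whose proof of this theorem is deferred to \cite{confarxiv}, but which repeats the argument in generalized form in Appendix~F for the limited-feedback version) likewise writes the critical criterion as $1-\prod_{i\in\mathcal{P}(t)}\mathds{P}(d_i(\kappa,t)=0)$, notes that $\mathds{P}(d_i(\kappa,t)=0)$ equals $e_i(t)$ for non-targeted critical users and $1$ for targeted ones by channel independence, and then removes the constant term over $\mathcal{P}(t)$ and takes logarithms to obtain the weights $w_{ij}^*=\log(1/e_i(t))$. Your factoring of $\prod_{i\in\mathcal{P}(t)}e_i(t)$ is just the multiplicative form of the paper's additive log-domain step, so the two arguments coincide.
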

\begin{proof}
The proof can be found in \cite{confarxiv}.
\end{proof}

\section{Proposed Heuristics Algorithms}\label{sec:alg}

\subsection{Maximum Weight Clique Solution}

\subsubsection{Instantly Decodable Network Coding Graph}

To look for possible packet combinations the base station can make, we use the G-IDNC graph representation. To construct this G-IDNC graph $\mathcal{G} (\mathcal{V},\mathcal{E})$, we first create a vertex $v_{ij} \in \mathcal{V}$ for each packet $j \notin \mathcal{H}_i,~ \forall~ i \in \mathcal{M}$. We then connect with an edge all $v_{ij}$ and $v_{kl}$ in $\mathcal{V}$ if one of the two following conditions is true:
\begin{itemize}
\item $j=l \Rightarrow$ Packet $j$ is needed by both clients $i$ and $k$.
\item $j \in \mathcal{H}_k$ and $l \in \mathcal{H}_i \Rightarrow$ The packet combination $j\oplus l$ is instantly decodable for both clients $i$ and $k$.
\end{itemize}

Given this graph formulation and according to the analysis done in \cite{arg1}, the set of all packet feasible combinations in G-IDNC is represented by all maximal cliques in $\mathcal{G}$. To generate a packet combination, binary XOR is applied to all the packets identified by the vertices of a selected maximal clique $\kappa$ in $\mathcal{G}$. The targeted clients by this transmission $\kappa$ are those identified by the vertices of the selected maximal clique.

\subsubsection{Multi-layer Greedy Algorithm}

Despite the importance of the satisfaction of the critical criterion in order to minimize the probability of increase of the maximum individual completion time, it may not fully exploit the power of IDNC. In other words, once a clique is chosen according to \eref{eq:criterion-optimization} from among the users in the highly critical set $\mathcal{P}(t)$, there may exist vertices belonging to other users that can form an even bigger clique. Thus, adding this vertex to the clique and serving this user will benefit them without affecting the IDNC constraint for the users belonging to $\mathcal{P}(t)$.  

To schedule such vertices and their users, let $\mathcal{G}_1,\mathcal{G}_2,...\mathcal{G}_h$ (with $h \in \mathds{N}$) be the sets of vertices of $\mathcal{G}(t)$, such that $v_{ik} \in \mathcal{G}_n$ if the following conditions are true:
\begin{itemize}
\item $\mathcal{C}_i(t-1)+ \cfrac{n}{1-\alpha_i} > \mathcal{C}_j(t-1)$.
\item $\mathcal{C}_i(t-1)+ \cfrac{n-1}{1-\alpha_i} \leq \mathcal{C}_j(t-1)$.
\end{itemize}
where $j =   \underset{ i \in \mathcal{M}}{\text{argmax }} \left\{ \mathcal{C}_i(t-1)\right\}$. Consequently, the IDNC graph at time $t$ is partitioned into $h$ layers with descending order of criticality. By examining the above condition, the vertices of the users of $\mathcal{P}(t)$ are all in layer $\mathcal{G}_1$. Moreover, the $n$-th layer of the graph includes the vertices of the users who may eventually increase $\max_{i\in\mathcal{M}}\left\{\mathcal{C}_i(t+n)\right\}$ if they experience $n$ decoding delay increments in the subsequent $n$ transmissions. Consequently, a user with vertices belonging to $\mathcal{G}_i$ is more critical than another with vertices belonging to $\mathcal{G}_j$, $j>i$, as the former has a higher chance to increase the overall completion time.

In order to guarantee the satisfaction of the critical criterion, our proposed algorithm first finds the maximum weight clique $\kappa^*$ in layer $\mathcal{G}_1$ as mandated by Theorem \ref{th:critical-criterion}. We then construct $\mathcal{G}_2(\kappa^*)$ including each vertex in $\mathcal{G}_2$ that is adjacent to all vertices in $\kappa^*$ (i.e. forms a bigger clique with $\kappa^*$). After assigning the same weights defined in \eref{eq:weights}, the maximal weight clique in $\mathcal{G}_2(\kappa^*)$ is found and added to $\kappa^*$. This process is repeated for each layer $\mathcal{G}_i, i\leq h$ of the graph to find the selected maximal weight clique $\kappa^*\in\mathcal{G}(t)$ to be transmitted at time $t$.\ignore{ The entire algorithm structure is illustrated in Algorithm \ref{algo1}.}

Since finding the maximum weight clique in the G-IDNC graph is NP-hard \cite{arg1} we will use the same simple vertex search approach with modified weights introduced in \cite{arg2} after tailoring the weights to the ones defined in \eref{eq:weights}. Let $w_{ij}$ be the modified weights, which can be expressed as:
\begin{align}
w_{ij} = (w_{ij}^* + 1) \times \sum_{v_{kl} \in \nu(v_{ij})} w_{kl}^*\;,
\label{omegamax}
\end{align}
where $\nu(v_{ij})$ is the set of adjacent vertices of $v_{ij}$ within its layer.
\begin{algorithm}[t]
\begin{algorithmic}
\REQUIRE $\mathbf{F}$, $p_i \text{ and } C_{i}(t-1),~\forall~ i\in\mathcal{M}$.
\STATE Initialize $\kappa^* =\varnothing$.
\STATE Construct $\mathcal{G}_1\left(\mathcal{V}_1,\mathcal{E}_1\right), \mathcal{G}_2\left(\mathcal{V}_2,\mathcal{E}_2\right), ..., \mathcal{G}_h\left(\mathcal{V}_h,\mathcal{E}_h\right)$.
\FOR{l=1 \TO h}
\STATE $\mathcal{G} \leftarrow \mathcal{G}_l$.
\FORALL{ $v \in \kappa^*$}
\STATE Sets $\mathcal{G} \leftarrow \mathcal{G}(\kappa^*)$.
\ENDFOR
\WHILE{$\mathcal{G} \neq \varnothing$}
\STATE Compute $w_{ij}^*$ and $w_{ij}$ using \eref{eq:weights} and \eref{omegamax}.
\STATE Select $v^* =\underset{v_{ij}\in\mathcal{G}}{\text{argmax}} \left\{w_{ij}\right\}$.
\STATE Sets $\kappa^* \leftarrow \kappa^* \cup v^*$.
\STATE Sets $\mathcal{G} \leftarrow \mathcal{G}(\kappa^*)$.
\ENDWHILE
\ENDFOR
\end{algorithmic}
\caption{Maximum Weight Vertex Search Algorithm}
\label{algo1}
\end{algorithm}

\subsection{Binary Particle Swarm Optimization Solution}

Particle swarm Optimization (PSO) is a population based search algorithm based on the simulation of the social behavior of animals. It was introduced in \cite{488968,494215,Kennedy} by Eberhart and Kennedy for continuous functions. To each individual in the multidimensional space is associated two vectors: the position vector and the velocity vector. The velocity vector determines in which direction the position vector should evolve. 

In \cite{637339}, the authors extended their algorithm to the binary optimization and in \cite{4433821}, Khanesar proposed a novel binary PSO. This algorithm is based on a new definition for the velocity vector of binary PSO. In order to state the outlines of this algorithm, we first introduce the following quantities: Let $X_i$ be the position of particle $i$, $P_{ibest}$ is the best position particle $i$ visited and $P_{gbest}$ the best position visited by any particle. Let $L$ be the total number of particles and $T$ the number of iteration of the BPSO algorithm.

Let $V_{ij}^1$ and $V_{ij}^0$ be the probabilities that a bit $j$ of the particle $i$ changes from $0$ to $1$ and from $1$ to $0$, respectively. These probabilities are computed using the following expressions:
\begin{align}
V_{ij}^1 = wV_{ij}^1+d_{ij,1}^1+d_{ij,2}^1 \\
V_{ij}^0 = wV_{ij}^0+d_{ij,1}^0+d_{ij,2}^0,
\end{align}
where $w$ is a random number in the range $[-1,1]$ chosen at the beginning of the BPSO algorithm and $d_{ij,1}$, $d_{ij,2}$, $d_{ij,1}^0$, and $d_{ij,2}^0$ are extracted using the rule below:
\begin{align}
\text{If } P_{ibest}^j=1 \text{ Then } d_{ij,1}^1=c_1r_1 \text{ and } d_{ij,1}^0=-c_1r_1 \nonumber \\
\text{If } P_{ibest}^j=0 \text{ Then } d_{ij,1}^0=c_1r_1 \text{ and } d_{ij,1}^1=-c_1r_1 \nonumber \\
\text{If } P_{ibest}^j=1 \text{ Then } d_{ij,2}^1=c_2r_2 \text{ and } d_{ij,2}^0=-c_2r_2 \nonumber \\
\text{If } P_{ibest}^j=1 \text{ Then } d_{ij,2}^0=c_2r_2 \text{ and } d_{ij,2}^0=-c_2r_2
\end{align}
where $c_1$ and $c_2$ are fixed factor determined by user and $r_1$ and $r_2$ are two random variable in the range $[0,1]$ chosen at each iteration of the BPSO algorithm. The velocity $V_{ij}^c$ of bit $j$ of particle $i$ is defined as:
\begin{align}
V_{ij}^c = 
\begin{cases}
V_{ij}^1 \text{ if } x_{ij} = 0 \\
V_{ij}^0 \text{ if } x_{ij} = 1
\end{cases}
\end{align}
The normalized velocity is obtained using the sigmoid function. This function is defined as follows:
\begin{align}
\text{sig}(x) = \cfrac{1}{1+e^{-x}}
\end{align}
Let $V_{ij}^\prime$ be the normalized velocity (i.e. $V_{ij}^\prime= \text{sig}(V_{ij}^c)$). The position update of the bit $j$ of particle $i$ is computed as follows:
\begin{align}
x_{ij}(t+1) = 
\begin{cases}
\overline{x}_{ij}(t)\text{if }& r_{ij}< V_{ij}^\prime \\
x_{ij}(t)  \text{if }& r_{ij} > V_{ij}^\prime
\end{cases} 
\end{align}
where $\overline{x}$ is the binary complementary of $x$ and $r_{ij}$ is a random number in the range $[0,1]$ chosen at each iteration and at each bit of the particle.

Let $\mathcal{F}$ be the function of interest that we want to minimize. The outline of the BPSO algorithm are the following:
\begin{enumerate}
\item Initialize the swarm $X_i$, the position of particles are randomly initialized within the hyper-cube.
\item Evaluate the performance $\mathcal{F}$ of each particle, using its current position $X_i(t)$.
\item compare the performance of each individual to its best performance so far: if $(\mathcal{F}(X_i(t)) < \mathcal{F}(P_{ibest}))$): \\
$\mathcal{F}(P_{ibest}) = \mathcal{F}(X_i(t))$ \\
$P_{ibest}  =X_i(t)$
\item Compare the performance of each particle to the global best particle:if $(\mathcal{F}(X_i(t)) < \mathcal{F}(P_{gbest}))$): \\
$\mathcal{F}(P_{gbest}) = \mathcal{F}(X_i(t))$ \\
$P_{gbest}  =X_i(t)$
\item change the velocity of the particle, $V_i^0$ and  $V_i^1$.
\item Calculate the velocity of change of the bits, $V_i^c$.
\item Generate the random variable $r_{ij}$ in the range: (0,1). Move each particle to a new position.
\item Go to step 2, and repeat $(T-1)$ times.
\end{enumerate}
Let $\phi_i(\kappa,t)$ be the weight of user $i$ when sending the packet combination $\kappa$ at time $t$. Following the result of the previous section, $f_i(\kappa,t)$ is expressed as:
\begin{align}
&\phi_i(\kappa,t) = \\
&\begin{cases}
 \text{log}\left( \cfrac{1}{ e_i(t)}\right)&\text{if } i \in \tau \cap \mathcal{P}(t)\\
 0 & \text{otherwise } 
\end{cases} \nonumber
\end{align}

As for the multi-layer solution in the maximum weight clique problem, we want to include users that are not in the critical layer under the condition that their inclusion to not disturb the instantaneous decodability of the targeted users in the critical layer. This inclusion is done using layer prioritization i.e. the inclusion of a user in layer $\mathcal{P}_m$ should not bother the instant dependability of users in layer $\mathcal{P}_n, n>m$. To reproduce this concept of prioritization using a single real function, we use the sigmoid function. The new objective function to maximize is the following:
\begin{align}
\phi_i^\prime(\kappa,t) = \sum_{i \in M_w )} \text{sig}(\tilde{\phi}_i(\kappa,t))+M*(h-P(i)) 
\end{align}
where $h$ is the total number of layers, $P(i)$ is the index of the layer of user $i$ and $\tilde{\phi}_i$ defined as follows: 
\begin{align}
\tilde{\phi}_i(\kappa,t) =  \text{log}\left( \cfrac{1}{ e_i(t)}\right)
\end{align}

The sigmoid function is an increasing function between $[0 ,1]$ and therefore the utility of each user is shifted according to the number of the layer he is in. Since no more that $M$ players can be simultaneously on the same layer, then all these layers are not overlapping and a single element of one layer will yield a better objective that all the sum of all the objective of user in layer below him. This function respects the prioritization as the multi-layer graph. The function of interest to minimize , at each time instant $t$, can be written as $\mathcal{F}(\kappa) = \phi_i^\prime(\kappa,t)$.
The following theorem gives the convergence of the whole system in our cases:
\begin{theorem}
For a number of particle equal to the number of packets (i.e. $L=N$) and a sparse initialisation vector, the overall system will converge  
\end{theorem}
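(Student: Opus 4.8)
The plan is to establish convergence in three stages: show that the global best objective value forms a monotone sequence over a finite search space, argue that this forces $P_{gbest}$ to stabilize after finitely many iterations, and then analyze the velocity recursion to conclude that every particle settles onto $P_{gbest}$. First I would observe that each particle's position lives in $\{0,1\}^N$, a finite set of $2^N$ configurations, so the objective $\mathcal{F}(\kappa)=\phi_i^\prime(\kappa,t)$ attains only finitely many distinct values. Because step~(4) of the BPSO outline updates $P_{gbest}$ only upon a strict improvement, the sequence $\{\mathcal{F}(P_{gbest})\}$ is monotone and bounded; a monotone sequence valued in a finite set must become eventually constant, so $P_{gbest}$ is fixed after at most as many improvements as there are distinct objective values.

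Once $P_{gbest}$ is frozen, I would examine the velocity recursion to show the swarm collapses onto it. When a bit $j$ of particle $i$ agrees with both $P_{ibest}^j$ and $P_{gbest}^j$, the guiding increments $d_{ij,1}$ and $d_{ij,2}$ reinforce the current value, driving the flip-direction velocity $V_{ij}^c$ increasingly negative and hence the normalized $V_{ij}^\prime=\text{sig}(V_{ij}^c)$ toward the regime of negligible flip probability. Here the hypotheses $L=N$ and sparse initialization enter: with one particle per packet and sparse starting vectors, each bit is represented across the swarm, which I would argue guarantees that the swarm can reach the layer-optimal configuration. Since the multi-layer objective $\phi_i^\prime$ has non-overlapping layers --- a single higher-layer user outweighing the entire sum of lower-layer users, as established in the preceding text --- the maximizer is well-defined, and once $P_{gbest}$ coincides with it the velocity dynamics above suppress all flips and the whole system converges.

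The main obstacle will be making the phrase ``the overall system will converge'' rigorous, since the sigmoid is strictly positive and so the flip probabilities never literally vanish in finite time. I expect the cleanest route is an almost-sure argument: show that the per-step flip probabilities $V_{ij}^\prime$ are summable along iterations (the velocities grow unboundedly in the favorable direction, so $\sum_t V_{ij}^\prime(t)<\infty$), then invoke the Borel--Cantelli lemma to conclude that only finitely many flips occur with probability one, which yields convergence of each particle. The subtler gap is proving that the frozen $P_{gbest}$ is in fact the true optimum rather than a trap; I would need the $L=N$ sparse-initialization hypothesis to certify sufficient initial coverage of the packet bits so that the optimal clique-equivalent configuration is reachable, and I would flag that without a positive-exploration guarantee this step is where a fully rigorous proof is most likely to require additional assumptions.
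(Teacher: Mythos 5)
There is a genuine gap here: you have proved (or sketched) the wrong notion of ``convergence.'' In the paper, ``the overall system will converge'' does not refer to the BPSO swarm settling onto a fixed point within one run of the optimizer; it refers to the IDNC session itself terminating, i.e.\ every user's Wants set eventually becoming empty so that completion occurs. The paper's proof identifies the worst case that breaks naive BPSO: a user $i$ with $\mathcal{H}_i=\varnothing$ while all other users are already complete. In that case the \emph{only} combinations with nonzero merit are the ``sparse'' ones (a single packet), and under random initialization the probability that any of the $L$ particles is sparse is $1-\bigl(1-N(1/2)^N\bigr)^{L}$, which is negligible for large $N$. Since the merit function is identically zero off the sparse set, the velocity updates convey no information, each iteration behaves like a fresh random initialization, and after $T$ iterations the success probability is only $1-\bigl(1-N(1/2)^N\bigr)^{L+T}$. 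The algorithm then outputs a combination that targets nobody, no Wants set shrinks, and the session stalls forever. The hypotheses $L=N$ and distinct sparse initialization are there precisely to guarantee that every single-packet combination is present in the initial swarm, so at least one particle has strictly positive merit at every run; hence every transmission targets at least one user and (barring erasure) removes at least one packet from some Wants set, which forces the session to complete \emph{independently of} $T$.

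Your argument cannot deliver this conclusion. The monotone-stabilization of $P_{gbest}$ over the finite cube $\{0,1\}^N$ holds for \emph{any} $L$ and \emph{any} initialization, so it cannot be where the hypotheses enter; and a swarm that freezes at a zero-merit configuration ``converges'' in your sense while the system it drives never completes --- which is exactly the failure mode the theorem is designed to rule out. You do flag the right weak point (that the frozen $P_{gbest}$ might be a trap, and that sparse initialization should certify coverage), but you treat it as a reachability issue inside the search dynamics, whereas the paper's point is simpler and sharper: with the prescribed initialization the optimum-enough particle is present at iteration zero, so no search dynamics, Borel--Cantelli summability, or velocity analysis is needed at all. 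As a secondary matter, your summable-flip-probability argument is itself doubtful in this variant of BPSO, since $w$ is drawn from $[-1,1]$ and the increments are randomized at each iteration, so the velocities need not grow monotonically in the favorable direction.
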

\begin{proof}
The proof of this Theorem can be found in Appendix C.
\end{proof}

\section{Extension To The limited Feedback Scenario}\label{sec:ext}

In this section, we extend our previous analysis to the limited feedback scenario. First, we introduce the system model, the backward channel model and the feedback protocol. We, then, derive the expression of the optimal packet combination to reduce the completion time in such lossy feedback environment. Finally, we propose modified version of previously introduced algorithm to effectively reduce the completion time.

\subsection{System Model}

Due to the feedback loss that can occur in the system, at the end of the \emph{initial phase}, three sets of packets are attributed to each user $i$ at the sender:
\begin{itemize}
\item The \emph{Has} set (denoted by $\mathcal{H}_i(t)$) is defined as the set of packets successfully received by user $i$.
\item The \emph{Wants} set (denoted by $\mathcal{W}_i(t)$) is defined as the set of packets that are lost by user $i$. In other words,we have $\mathcal{W}_i = \mathcal{N} \setminus \mathcal{H}_i$.
\item The Uncertain set (denoted by $\mathcal{U}_i(t)$). It is the sets of packets which the BS is not certain if either the packet or the feedback were erased. We have $\mathcal{U}_i(t) \subseteq \mathcal{W}_i(t)$.
\end{itemize}

The BS saves the information obtained after the transmission at time $(t-1)$ in a \emph{feedback matrix} (FM) $\mathbf{F}(t) = [f_{ij}(t)],~ \forall~ i \in \mathcal{M},~ \forall~j \in \mathcal{N},~ \forall~t > 0$ such that:
\begin{align}
f_{ij}(t) =
\begin{cases}
0 \hspace{0.9 cm}& \text{if } j \in \mathcal{H}_i(t) \\
1 \hspace{0.9 cm}& \text{if } j \in \mathcal{W}_i(t) \setminus \mathcal{U}_i(t) \\
x \hspace{0.9 cm}& \text{if } j \in \mathcal{W}_i(t) \cap \mathcal{U}_i(t).
\end{cases}
\end{align}

\subsection{Backward Channel Model and Feedback Protocol}

We model the backward channel (from the users to the BS), as for the forward channel, by a Gilbert-Elliott channel. The parameters of the backward channel are defined in the same way as those of the forward channel with a superscript $b$. For example, $C_i^b$ will denote the state of the forward channels. For each user $i$, the memory factor of the backward channel $\psi_i$ and the average memory can be formulated as:
\begin{align}
\psi_i = 1 - g_i^b - b_i^b \nonumber\\
\psi = \cfrac{\sum\limits_{i \in \mathcal{M}} \psi }{ M}.
\end{align}

When both the forward and the backward channel have the same transitions probabilities (i.e. $g_i^f = g_i^b$ and $b_i^f = b_i^b, \forall~ i \in \mathcal{M}$), the channel is said to be identically distributed and when they are experiencing the same channel realization (i.e $C_i^f(t) = C_i^b(t),\forall~ t >0$) the channel is said to be reciprocal.

Before transmitting the data packet, the BS first transmits $N$ bits representing the source packets that are combined in the encoded packet (i.e. if the packet is in the combination the corresponding bit is set to $1$ otherwise it is set to $0$). Each user, after listening to the header, continues to listen if the packet combination is instantly decodable for him. Otherwise, energy is saved by ignoring the next transmission. The targeted users by a transmission are the users that can instantly decode a packet from that transmission. 

Time is divided into frames of same length equal to $T_f$ time slots. Each of these frame is composed of a downlink sub-frame and an uplink sub-frame. The length of the downlink and uplink sub-frame are respectively $T_d$ and $T_u$. In other words, we have $T_f = T_d+T_u$. The BS transmits the packet combinations during the downlink sub-frame and do not get any feedback the meanwhile. During the uplink sub-frame, the BS listens to the feedbacks sent from the different users and do not transmit packets. After each downlink sub-frame, users that received and managed to decode the packet combination during that downlink sub-frame acknowledges its reception by sending a feedback during the uplink sub-frame. Define $T_{u_i}$ as the time-slot in the uplink sub-frame in which user $i$ is able to send feedback (i.e. $1 \leq T_{u_i} \leq T_u ,~ \forall~ i \in \mathcal{M}$). In other words, user $i$ can transmit acknowledgement, during frame number $n$, at time $t=nT_f-T_u+T_{u_i}$. Only targeted users during the downlink sub-frame will send acknowledgement. In other words, if a feedback from one of the targeted user is lost, the BS will not get any feedback from this user until the next transmission in which it is targeted. Each feedback sent from a user consists of $N$ bits indicating all previously received/lost packets. As a consequence, after receiving feedback from an arbitrary user, the states of all its packet in the FM will be certain (i.e. $\mathcal{U}_i = \varnothing$). The BS uses the feedback to update the feedback matrix. This process is repeated until all users report that they obtained all the wanted packets. 

In order to be able to accurately estimate the state of the forward/backward channels for each user, we impose, as in \cite{refjournal}, that for each targeted user, from the last time a feedback is heard from that user, there will be at least one packet which is attempted only once. This constraint becomes unnecessary for user that still need only a single source packet. An additional $log_2(T_d)$ are send with the feedback indication the decoding delay encountered by the user during the downlink sub-frame.

\subsection{Transmission/Feedback Loss Probabilities at Time $t$}

In this section, we compute the probability to loose the transmission $e_i(t) \triangleq\mathds{P}(C_i^p(t) = B) ,~ \forall~ i \in \mathcal{M}$ and to loose the feedback $f_i(t) \triangleq \mathds{P}(C_i^q(t) = B) ,~ \forall~ i \in \mathcal{M}$, at time slot $t$.

In order to compute these probabilities, we first introduce the following variables: Let $n_i^{(-1)}$ and $n_i^{(0)}$ ($n_i^{(-1)} < n_i^{(0)}$) be the indices of the most recent frame where the sender heard a feedback from user $i$. Let $\lambda_{ij}(n)$ be the set of the time indices when packet $j$ was attempted to user $i$ during frame number $n$. Define $j_i^{(0)}$ as the last sent packet among those which were attempted only once between the two frames $(n_i^{(-1)}+1)$ and $n_i^{(0)}$ to user $i$. This variable can be mathematically defined as:
\begin{align}
j_i^{(0)} &= \underset{j \in \mathcal{W}_i(n_i^{(0)} \times T_f)}{\text{argmax}}  \bigcup_{k=n_i^{(-1)}+1}^{n_i^{(0)}} \lambda_{ij}(k) \nonumber \\
& \quad \text{subject to } \left| \bigcup_{k=n_i^{(-1)}+1}^{n_i^{(0)}} \lambda_{ij}(k) \right| = 1 ,
\end{align}
where $\cup_{x \in X}A_x $ is the union of the sets $A_x,~ \forall~ x \in X$. Let $t_i^{(0)}$ be the time where packet $j_i^{(0)}$ was attempted to receiver $i$ and $t_i^*$ the last time the a feedback was heard from user $i$. In other words:
\begin{align}
t_i^{(0)} &= \bigcup_{k=n_i^{(-1)}+1}^{n_i^{(0)}} \lambda_{ij_i^{(0)}}(k) \\
t_i^* &= n_i^{(0)} \times T_f - T_u + T_{u_i}.
\end{align}

Given these definitions, we can introduce the following theorem regarding  the loss probabilities of the forward $e_i(t)$ and feedback $f_i(t)$ transmissions at any given time $t$.
\begin{theorem}
The probabilities $e_i(t)$ of loosing a transmission from receiver $i$ at time $t>t_i^*$ can be expressed as:
\begin{align}
&e_i = \\
&\begin{cases}
&\cfrac{p_i^fg_i^f}{p_i^fg_i^f+q_i^fb_i^f}(p_i^f+(q_i^f-p_i^f)b_i^f\sum\limits_{i=0}^{t-t_i^{(0)}-1} \mu^{i})  \nonumber \\ 
& + \cfrac{q_i^fb_i^f}{p_i^fg_i^f+q_i^fb_i^f}(q_i^f+(p_i^f-q_i^f)g\sum\limits_{i=0}^{t-n^0-1} \mu^{i}) \nonumber \\ 
& \qquad \text{ if } f_{ij^0} = 1 \nonumber \\
&\cfrac{(1-p_i^f)g_i^f}{(1-p_i^f)g_i^f+(1-q_i^f)b_i^f} \nonumber \\
& \hspace{2cm} \times  (p_i^f+(q_i^f-p_i^f)b_i^f\sum\limits_{i=0}^{t-t_i^{(0)}-1} \mu^{i})  \nonumber \\ 
& + \cfrac{(1-q_i^f)b_i^f}{(1-p_i^f)g_i^f+(1-q_i^f)b_i^f} \nonumber \\
& \hspace{2cm} \times  (q_i^f+(p_i^f-q_i^f)g_i^f\sum\limits_{i=0}^{t-t_i^{(0)}-1} \mu^{i}) \nonumber \\ 
& \qquad  \text{ if } f_{ij^0} = 0
\end{cases}
\end{align}
The probabilities $f_i(t)$ of loosing a feedback from user $i$ at time $t>t_i^*$ can be expressed as:
\begin{align}
&f_i = \cfrac{(1-p_i^b)g_i^b}{(1-p_i^b)g_i^b+(1-q_i^b)b_i^b} (p_i^b+(q_i^b-p_i^b)b_i^b\sum_{i=0}^{t-t_i^*-1} \psi^{i})  \nonumber \\ 
& + \cfrac{(1-q_i^b)b_i^b}{(1-p_i^b)g_i^b+(1-q_i^b)b_i^b}(q_i^b+(p_i^b-q_i^b)g_i^b\sum_{i=0}^{t-t_i^*-1} \psi^{i})   
\end{align}
\label{sffrh}
\end{theorem}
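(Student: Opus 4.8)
The plan is to compute each loss probability by the law of total probability, conditioning on the state of the Gilbert--Elliott chain at the most recent instant for which the BS holds a reliable channel observation, and then propagating that conditional state distribution forward to time $t$ through the $n$-step transition law of the two-state chain.

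First, for the forward loss probability $e_i(t)$, I would take as reference the instant $t_i^{(0)}$ at which the single-attempt packet $j_i^{(0)}$ was transmitted. The single-attempt restriction is essential here: because $j_i^{(0)}$ was sent to user $i$ exactly once between the two feedback frames, the bit $f_{ij^0}$ records the outcome of a single Bernoulli erasure trial at the known time $t_i^{(0)}$ and can therefore be turned into a clean Bayesian observation of the chain state at that instant. Using the steady-state weights $\mathcal{P}_{G_i^f}=g_i^f/(g_i^f+b_i^f)$ and $\mathcal{P}_{B_i^f}=b_i^f/(g_i^f+b_i^f)$ as prior, and the per-state erasure probabilities $p_i^f$ (good) and $q_i^f$ (bad) as likelihoods, Bayes' rule yields the posterior distribution of $C_i^f(t_i^{(0)})$ given $f_{ij^0}$. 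When $f_{ij^0}=1$ (packet lost) the likelihoods are $p_i^f,q_i^f$, giving posterior weight $p_i^f g_i^f/(p_i^f g_i^f+q_i^f b_i^f)$ on the good state; when $f_{ij^0}=0$ (packet received) they become $1-p_i^f,1-q_i^f$, giving $(1-p_i^f)g_i^f/((1-p_i^f)g_i^f+(1-q_i^f)b_i^f)$. These are exactly the two leading coefficients of the theorem.

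Next I would propagate this posterior from $t_i^{(0)}$ to $t$. Setting $n=t-t_i^{(0)}$, the two-state chain has eigenvalues $1$ and $\mu_i=1-g_i^f-b_i^f$, so its $n$-step transition probabilities admit the closed forms $\mathds{P}(C_i^f(t)=B\mid C_i^f(t_i^{(0)})=G)=\tfrac{b_i^f}{g_i^f+b_i^f}(1-\mu_i^{\,n})$ and $\mathds{P}(C_i^f(t)=B\mid C_i^f(t_i^{(0)})=B)=\tfrac{b_i^f}{g_i^f+b_i^f}+\tfrac{g_i^f}{g_i^f+b_i^f}\mu_i^{\,n}$, with complementary expressions for landing in the good state. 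I would then assemble the conditional loss probability given each starting state, $A_G=p_i^f\,\mathds{P}(C_i^f(t)=G\mid C_i^f(t_i^{(0)})=G)+q_i^f\,\mathds{P}(C_i^f(t)=B\mid C_i^f(t_i^{(0)})=G)$ and likewise $A_B$. Substituting the transition probabilities and using $\sum_{k=0}^{n-1}\mu_i^{\,k}=(1-\mu_i^{\,n})/(1-\mu_i)=(1-\mu_i^{\,n})/(g_i^f+b_i^f)$ rewrites $A_G=p_i^f+(q_i^f-p_i^f)b_i^f\sum_{k=0}^{n-1}\mu_i^{\,k}$ and $A_B=q_i^f+(p_i^f-q_i^f)g_i^f\sum_{k=0}^{n-1}\mu_i^{\,k}$, which are precisely the bracketed factors in the statement. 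Weighting $A_G$ and $A_B$ by the posteriors of the previous paragraph via the law of total probability reproduces both branches of the claimed $e_i$ expression.

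Finally, the feedback loss probability $f_i(t)$ follows the identical template with the backward-channel parameters (superscript $b$ and memory $\psi_i$) and reference instant $t_i^*$, the last time a feedback was actually heard. Since hearing a feedback is by definition the event that the backward transmission was not lost, only the ``received'' branch arises, reproducing the single displayed formula for $f_i$. The remaining work is purely algebraic. I expect the main obstacle to be twofold: deriving and carefully bookkeeping the closed-form $n$-step transition matrix so that the geometric series $\sum_{k=0}^{n-1}\mu_i^{\,k}$ emerges cleanly, and rigorously justifying that the single-attempt restriction on $j_i^{(0)}$ (respectively the ``feedback heard'' event at $t_i^*$) renders the recorded bit a genuine one-shot observation of the chain state, so that no earlier transmissions need to enter the Bayesian update.
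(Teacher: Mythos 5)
Your proposal is correct and follows essentially the same route as the paper's proof: a Bayesian update of the channel state at the reference instant ($t_i^{(0)}$ for the forward channel, $t_i^*$ for the backward one), propagation to time $t$ via the $n$-step transition law whose geometric-series form in $\mu_i$ (resp.\ $\psi_i$) yields the bracketed factors, and the law of total probability to combine the per-state erasure probabilities with the posterior weights. The only cosmetic difference is that you re-derive the $n$-step transition probabilities by eigenvalue analysis, whereas the paper invokes them as auxiliary Theorems (its Appendix A results, cited from prior work).
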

\begin{proof}
The proof can be found in Appendix D.
\end{proof}

\subsection{Problem Formulation}

In order to state the optimization problem we first define this two probability : 
\begin{itemize}
\item The innovative probability $p_{i,n}(j,t)$: probability that packet $j$ is innovative for user $i$ at time $t$.
\item The finish probability $p_{i,f}(t)$: probability that user $i$ successfully received all his primary packets but $\mathcal{W}_i(t) \neq \varnothing$ at time $t$.
\end{itemize} 
The expressions of these probabilities is available in the Appendix E.

The following theorem defines a maximum weight clique algorithm that can satisfy the critical criterion.
\begin{theorem}
The critical criterion in \eref{eq:heuristic-criterion} can be achieved by selecting $\kappa^*(t)$ according to the following optimization problem:
\begin{align}
&\kappa^{*}(t) \nonumber \\
&= \underset{\kappa(t) \in \mathcal{G}}{\text{argmax}} \left\{ \sum_{i \in (\mathcal{P}(t) \cap \tau ) } \text{log}\left( 1 + \cfrac{p_{i,n}(\kappa_i,t)}{ \cfrac{e_i(t)}{1-e_i(t)}+p_{i,f}(t)}\right) \right\}.
\end{align}
In other words, the transmission $\kappa(t)$ than can satisfy the critical criterion can be selected using a maximum weight clique problem in which the weight of each vertex $v_{ij}$ in $\mathcal{P}(t)$ can be expressed as:
\begin{align}\label{degwe}
w_{ij}^* = \text{log}\left( 1 + \cfrac{p_{i,n}(j,t)}{ \cfrac{e_i(t)}{1-e_i(t)}+p_{i,f}(t)}\right).
\end{align}
\end{theorem}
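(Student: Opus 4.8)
The plan is to reuse the analytical framework established for the perfect-feedback case in \thref{th:critical-criterion}, changing only the per-user decoding-delay-increment probabilities so as to account for the feedback uncertainty. As in that proof, the event $\{\max_{i\in\mathcal{M}}\mathcal{C}_i(t) > \max_{i\in\mathcal{M}}\mathcal{C}_i(t-1)\}$ occurs if and only if at least one user of the highly critical set $\mathcal{P}(t)$ of \eref{eq:critical-set} experiences a decoding delay increment at time $t$, because a single increment to such a user pushes its anticipated completion time above the current maximum, while an increment to any other user cannot (each user suffers at most one increment per transmission). Since the users are seen through mutually independent channels, the probability that the maximum does not increase factorizes as $\prod_{i\in\mathcal{P}(t)}\pi_i$, where $\pi_i$ is the probability that user $i$ avoids an increment. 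Writing $\pi_i^{0}$ for this probability when $i$ is not targeted and $\pi_i^{1}(j)$ for its value when $i$ is targeted by vertex $v_{ij}$, and factoring out the schedule-independent term $\prod_{i\in\mathcal{P}(t)}\pi_i^{0}$, I would show that maximizing the no-increase probability over the targeted set $\tau$ is equivalent to maximizing $\sum_{i\in\mathcal{P}(t)\cap\tau}\log\big(\pi_i^{1}(\kappa_i)/\pi_i^{0}\big)$, i.e.\ a maximum weight clique problem with vertex weight $w_{ij}^{*}=\log\big(\pi_i^{1}(j)/\pi_i^{0}\big)$.

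The only step that genuinely departs from \thref{th:critical-criterion} is the evaluation of $\pi_i^{0}$ and $\pi_i^{1}(j)$ under lossy feedback, for which I would enumerate the disjoint outcomes of a user $i\in\mathcal{P}(t)$ in terms of the loss probability $e_i(t)$, the finish probability $p_{i,f}(t)$, and the innovative probability $p_{i,n}(j,t)$. An untargeted user avoids an increment exactly when it loses the transmission (probability $e_i(t)$) or when it receives the combination but has in fact already completed all its primary packets (probability $(1-e_i(t))\,p_{i,f}(t)$), since any other reception of a non-instantly-decodable combination forces an increment; hence $\pi_i^{0}=e_i(t)+(1-e_i(t))\,p_{i,f}(t)$. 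A targeted user additionally avoids an increment when it receives a packet that is innovative for it, an event disjoint from being already finished because a finished user has no innovative packet; this adds $(1-e_i(t))\,p_{i,n}(j,t)$, so that $\pi_i^{1}(j)=e_i(t)+(1-e_i(t))\big(p_{i,f}(t)+p_{i,n}(j,t)\big)$.

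Taking the ratio then gives
\begin{align}
\frac{\pi_i^{1}(j)}{\pi_i^{0}}
&= 1 + \cfrac{(1-e_i(t))\,p_{i,n}(j,t)}{e_i(t)+(1-e_i(t))\,p_{i,f}(t)} \nonumber \\
&= 1 + \cfrac{p_{i,n}(j,t)}{\cfrac{e_i(t)}{1-e_i(t)}+p_{i,f}(t)}\;,
\end{align}
so that $w_{ij}^{*}=\log\big(\pi_i^{1}(j)/\pi_i^{0}\big)$ is precisely the weight of \eref{degwe}, and summing it over the targeted vertices of $\mathcal{P}(t)$ recovers the stated optimization; setting $p_{i,f}(t)=0$ and $p_{i,n}(j,t)=1$ moreover collapses the weight to $\log(1/e_i(t))$, which recovers \eref{eq:weights} in the perfect-feedback limit and serves as a consistency check. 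I expect the main obstacle to be the rigorous justification of this event decomposition: one must verify that the ``already finished'' and ``receives an innovative packet'' events are genuinely disjoint with additive probabilities, and that the per-user increments remain independent once the base station's uncertainty about $\mathcal{H}_i(t)$ and $\mathcal{W}_i(t)$ (encoded through the uncertain set $\mathcal{U}_i(t)$) is taken into account. Given these probabilities, the passage to the weighted clique is routine and mirrors \thref{th:critical-criterion}.
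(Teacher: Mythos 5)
Your proposal is correct and follows essentially the same route as the paper's proof: both reduce the critical criterion to maximizing $\prod_{i\in\mathcal{P}(t)}\mathds{P}(\text{no delay increment for }i)$ using channel independence, take logarithms, discard the schedule-independent contribution of untargeted users, and arrive at the weight as the log-ratio of the targeted to untargeted no-increment probabilities. The only difference is presentational: the paper invokes a five-case probability breakdown (over the sets $\tau$, $U$, $F$) cited from its earlier work and then merges the cases using $p_{i,f}(t)=0$ on $\overline{F}$ and $p_{i,n}(\kappa_i,t)=1$, $p_{i,f}(t)=0$ on $\overline{U}$, whereas you write the two unified formulas $\pi_i^{0}=e_i+(1-e_i)p_{i,f}$ and $\pi_i^{1}(j)=e_i+(1-e_i)(p_{i,f}+p_{i,n}(j))$ directly, which are equivalent to those merged cases.
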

\begin{proof}
The proof can be found in Appendix F.
\end{proof}

\subsection{Proposed Algorithm}

\subsubsection{Maximum Weight Clique Solution}

In order to minimize the completion time in G-IDNC, we look for all possible combinations of source packets then take the combination of these packets that guarantees the minimum delay for this transmission. To represent all the feasible packet combinations, we use the graph model introduced in \cite{letterarxiv} and called the lossy D-IDNC graph (LG-IDNC). To construct the LG-IDNC graph, we first introduce the expected decoding delay increase $d_{ij,kl}(j \oplus l)$ for two distinct arbitrary users $i$ and $k$ after sending the packet combination $j \oplus l$:
\begin{align}
\label{joplusl}
&d_{ij,kl}(j \oplus l) =  \\
&  (1-e_i)(p_{n,i}(j)p_{n,i}(l)+ (1-p_{n,i}(j)(1-p_{n,i}(l)))\overline{p}_{f,i}+  \nonumber \\
& (1-e_k)(p_{n,k}(j)p_{n,k}(l)+ (1-p_{n,k}(j)(1-p_{n,k}(l)))\overline{p}_{f,k},\nonumber
\end{align}
where $\overline{p}_{f,i} = 1 - {p}_{f,i}$.

To obtain the expected decoding delay increase $d_{ij,kl}(j)$ for these users after sending packet $j$, we replace $l$ by $0$ in \eref{joplusl} and take $p_{n,i}(0) =0,~\forall~i \in \mathcal{M}$. To construct the LG-IDNC graph $\mathcal{G} (\mathcal{V},\mathcal{E})$, we first create a vertex $v_{ij}  ,~\forall~ i \in \mathcal{M},~\forall~ j \in \mathcal{W}_i$. We then connect two users, if they either need the same packet or if the expected decoding delay increase is lower when sending the packet combination than when sending only one packet. In other words, we connect by an edge two vertices $v_{ij}$ and $v_{kl}$ if one of the following conditions is true:
\begin{itemize}
\item C1: $j=l \Rightarrow$ Packet $j$ is needed by both the users $i$ and $k$.
\item C2: $d_{ij,kl}(j \oplus l) \leq min (d_{ij,kl}(j) ,d_{ij,kl}(l) ) \Rightarrow$ The packet combination $j \oplus l$ guarantees a lower decoding delay to the users $i$ and $k$ than packets $j$ and $l$ individually.
\end{itemize}

Unlike condition C1 that does not require packet combination, C2 involves the combination of packet $j$ and $l$. It was shown in \cite{arg1} that in the perfect feedback scenario, all possible packet combinations is equivalent to a maximal weight clique in the LG-IDNC graph. Therefore the combination that reduces the best the completion time for the current transmission is the maximum weight clique in the LG-IDNC graph. This result can be extended to the lossy intermittent feedback. The BS generate the encoded packet by taking the binary XOR of packet represented by the maximal weight clique in the LG-IDNC graph. Users targeted by this transmission are those represented by this maximal weight clique.

After definition of the LG-IDNC graph, we use the multi-layer algorithm developed in the previous \sref{sec:alg} with the new weights developed in \eref{degwe}.

\subsubsection{BPSO Algorithm}

The same line of thinking used in \sref{sec:alg} applies in the case of the limited feedback scenario. We objective function is defined using the new weights \eref{degwe} to reflect the uncertainties in the system. The new objective function to maximize is the following:
\begin{align}
\phi_i^\prime(\kappa,t) = \sum_{i \in M_w )} \text{sig}(\tilde{\phi}_i(\kappa,t))+M*(h-P(i)) 
\end{align}
where $h$ is the total number of layers, $P(i)$ is the index of the layer of user $i$ and $\tilde{\phi}_i$ defined as follows: 
\begin{align}
\tilde{\phi}_i(\kappa,t) =  \text{log}\left( 1 + \cfrac{p_{i,n}(\kappa,t)}{ \cfrac{e_i(t)}{1-e_i(t)}+p_{i,f}(t)}\right)
\end{align}

\subsection{Blind Graph Policies Solution}

In the lossy intermittent feedback, uncertainties about the reception state of the different packets and the decodability conditions make the algorithm proposed in \cite{vtc} non effective to actually reduce the completion time. To solve this problem, we introduce three partially blind algorithms that estimate all the uncertain packets with a predefined policy, update the graph accordingly and finally perform packet selection using the algorithm proposed in \cite{vtc}. These graph update approaches are the generalization of the update methods proposed in \cite{ref13} in the context of reducing the completion time with lossy feedback.

\subsubsection{Pessimist Graph Update}

In this approach, all packets that are not fed back by users are considered erased rather than assuming that their feedback is erased. Reconsidering these packets in the following transmissions gives them a greater chance to be reattempted rapidly. Since no acknowledgement is expected to be heard during the downlink sub frame, packets attempted meanwhile are systematically not reconsidered in the following transmissions. If a feedback is heard in the uplink sub frame, the state of the user is updated. Otherwise, all the uncertain packet of that user are reconsidered.

In the pessimist graph update approach, uncertain vertices are removed from the graph during the downlink sub frame and reconsidered in the uplink sub frame if no acknowledgement is heard from the concerned user.

\subsubsection{Optimist Graph Update}

In this approach, all packets that are not fed back by users are considered received and their corresponding feedback erased. Not reconsidering these packets in the following transmissions gives a greater chance to non attempted packets to be transmitted. Since no feedback can be heard from a user having all its packets in an uncertain state, unless this user is targeted. therefore, users with full uncertain Wants set are reconsidered after the uplink sub frame.

In the optimist graph update approach, uncertain vertices are removed from the graph and reconsidered in the uplink sub frame if the user have full uncertain Wants set.

\subsubsection{Realistic Graph Update}

In this approach, all packets that are not fed back by users are probabilistically considered received and their acknowledgement erased and reciprocally. This approach tends to stochastically balance between reattempting packets with unheard feedback and transmitting new packets. Since during the downlink sub frame no acknowledgement is expected to be heard, then packets are reconsidered with probability $\mathcal{P}_{B_i^f}$ and discarded with probability $\mathcal{P}_{G_i^f}$. In the uplink frame, all the uncertain packets are reconsidered with probability $\mathcal{P}_{B_i^b}$ and removed with probability $\mathcal{P}_{G_i^b}$.

In the realist graph update approach, uncertain vertices are removed from the graph with probability $\mathcal{P}_{G_i^f}$ in the downlink frame and with probability $\mathcal{P}_{G_i^b}$ in the uplink frame.

\section{Simulation Results}\label{sec:sim}

\begin{figure}[t]
\centering
  \includegraphics[width=1\linewidth]{./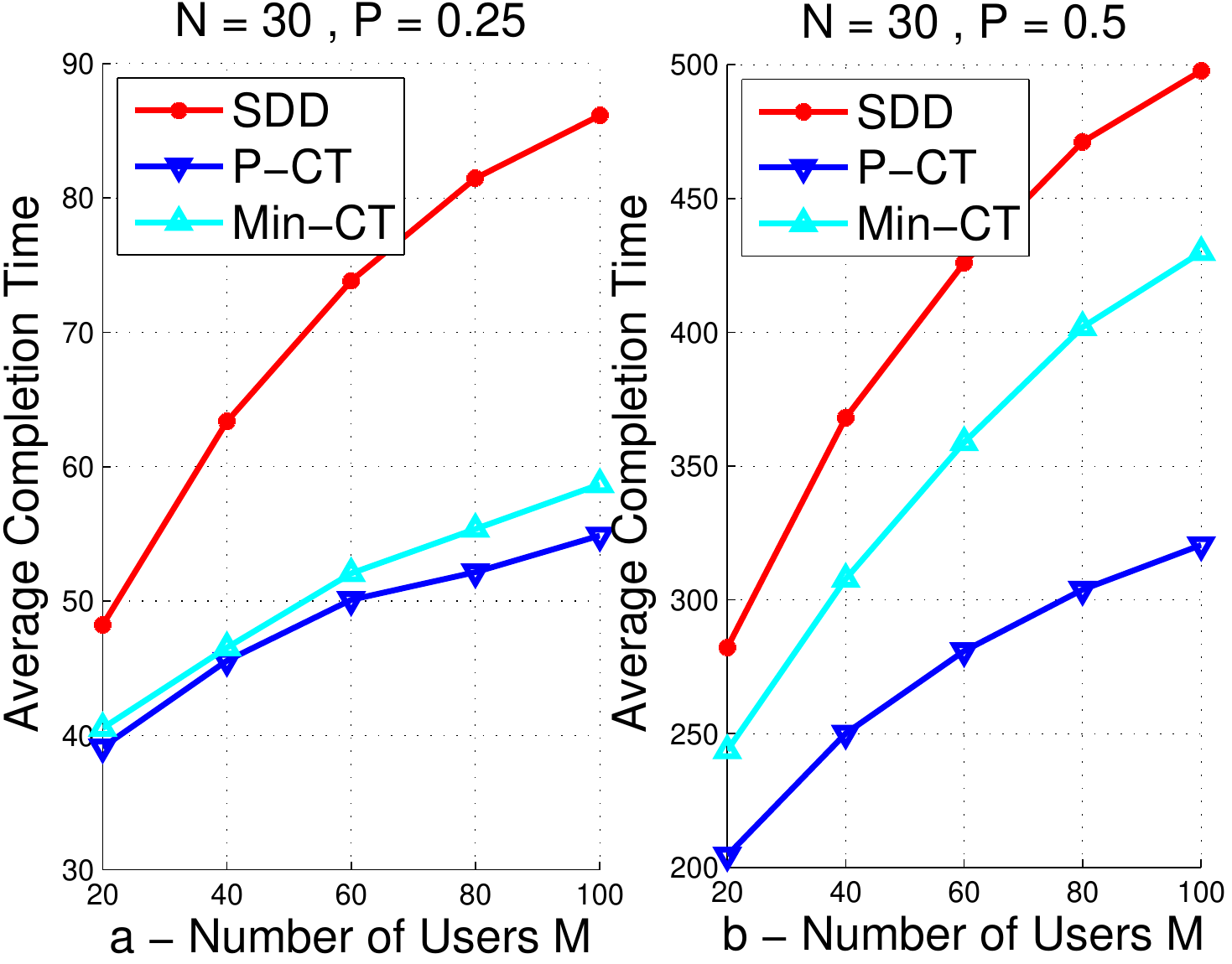}\\
  \caption{Mean completion time for G-IDNC versus number of users $M$.}\label{fig:M}
\end{figure}

\begin{figure}[t]
\centering
  \includegraphics[width=1\linewidth]{./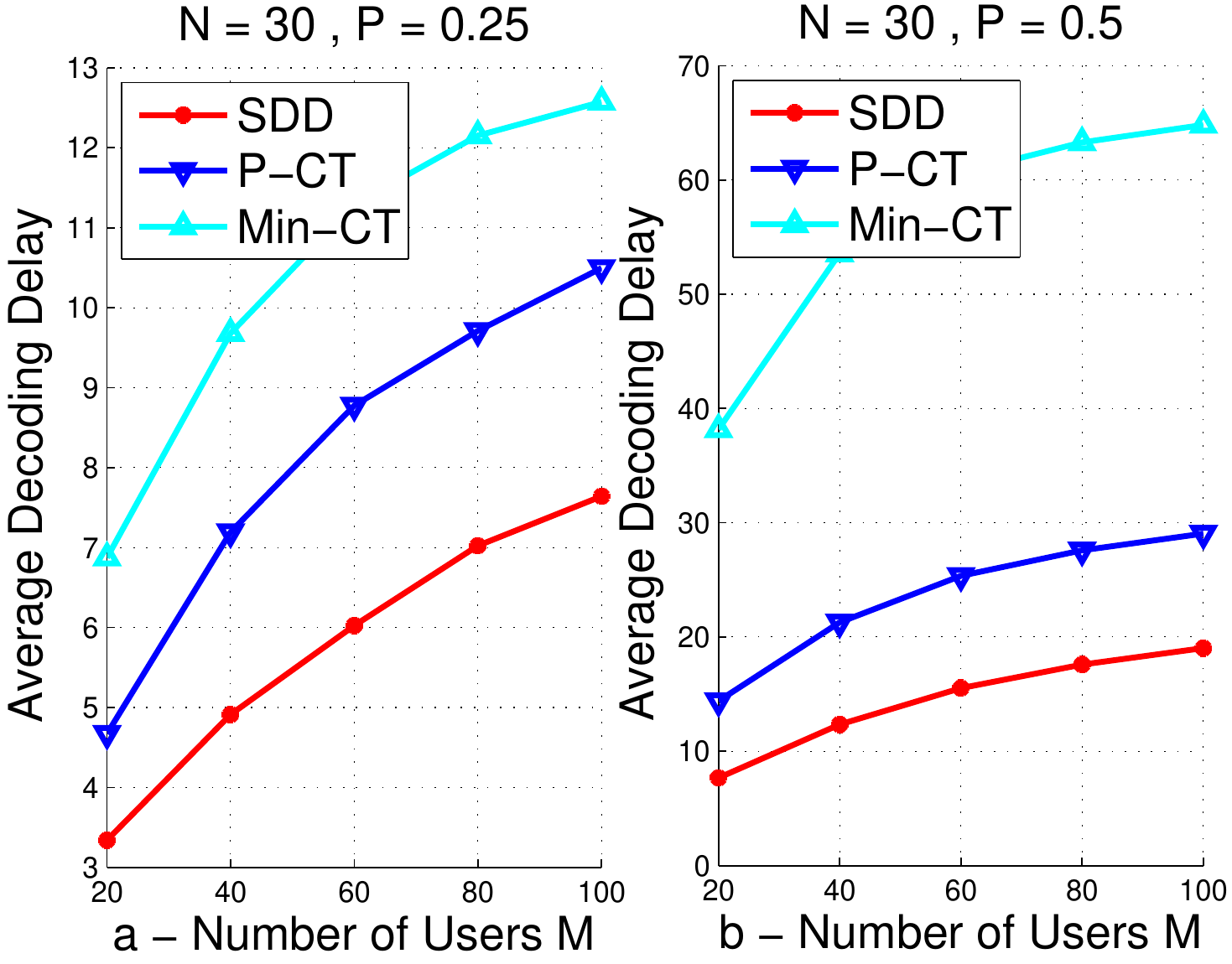}\\
  \caption{Mean decoding delay for G-IDNC versus number of users $M$.}\label{fig:M2}
\end{figure}

\begin{figure}[t]
\centering
  \includegraphics[width=1\linewidth]{./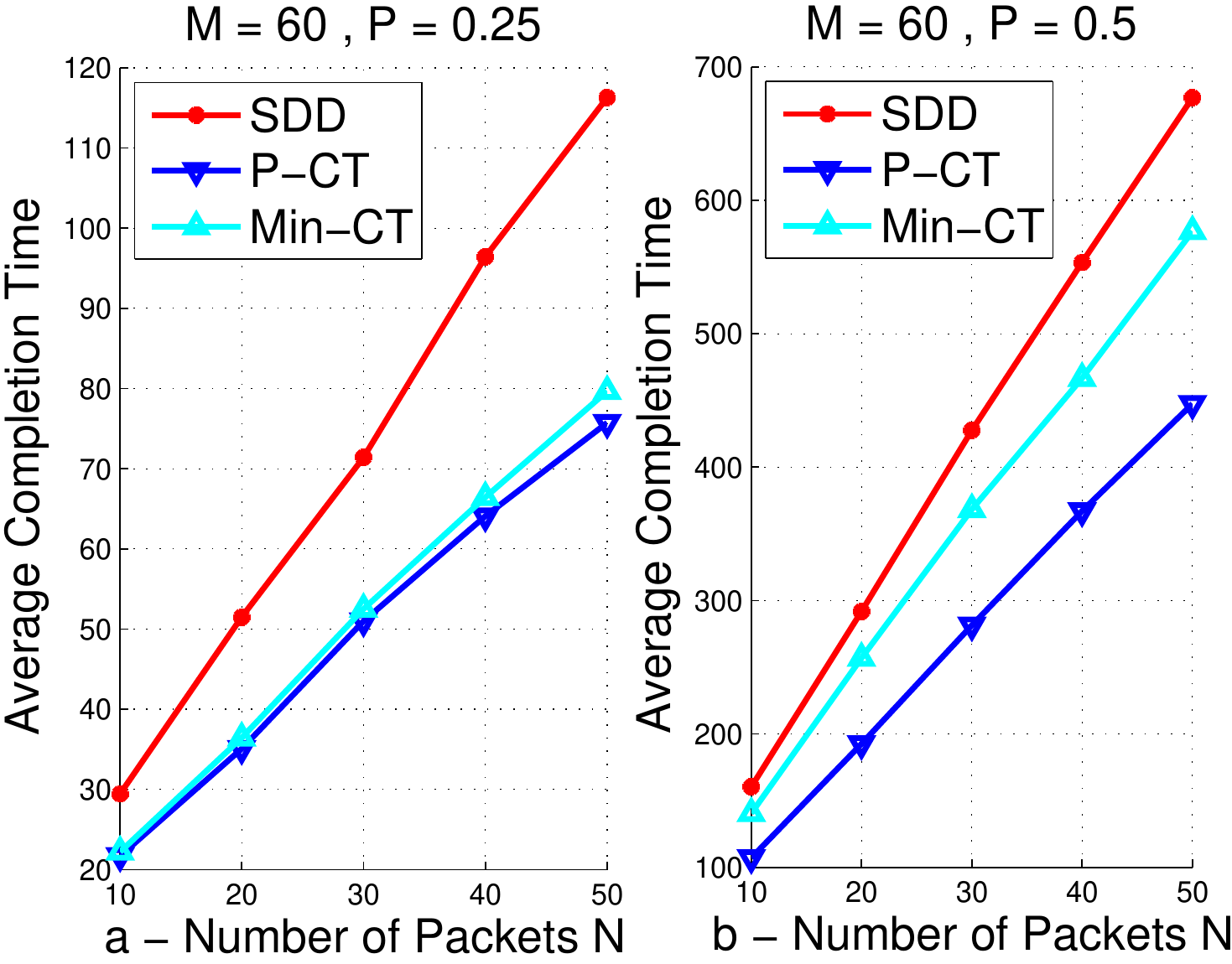}\\
  \caption{Mean completion time for G-IDNC versus number of packets $N$.}\label{fig:N}
\end{figure}

\begin{figure}[t]
\centering
  \includegraphics[width=1\linewidth]{./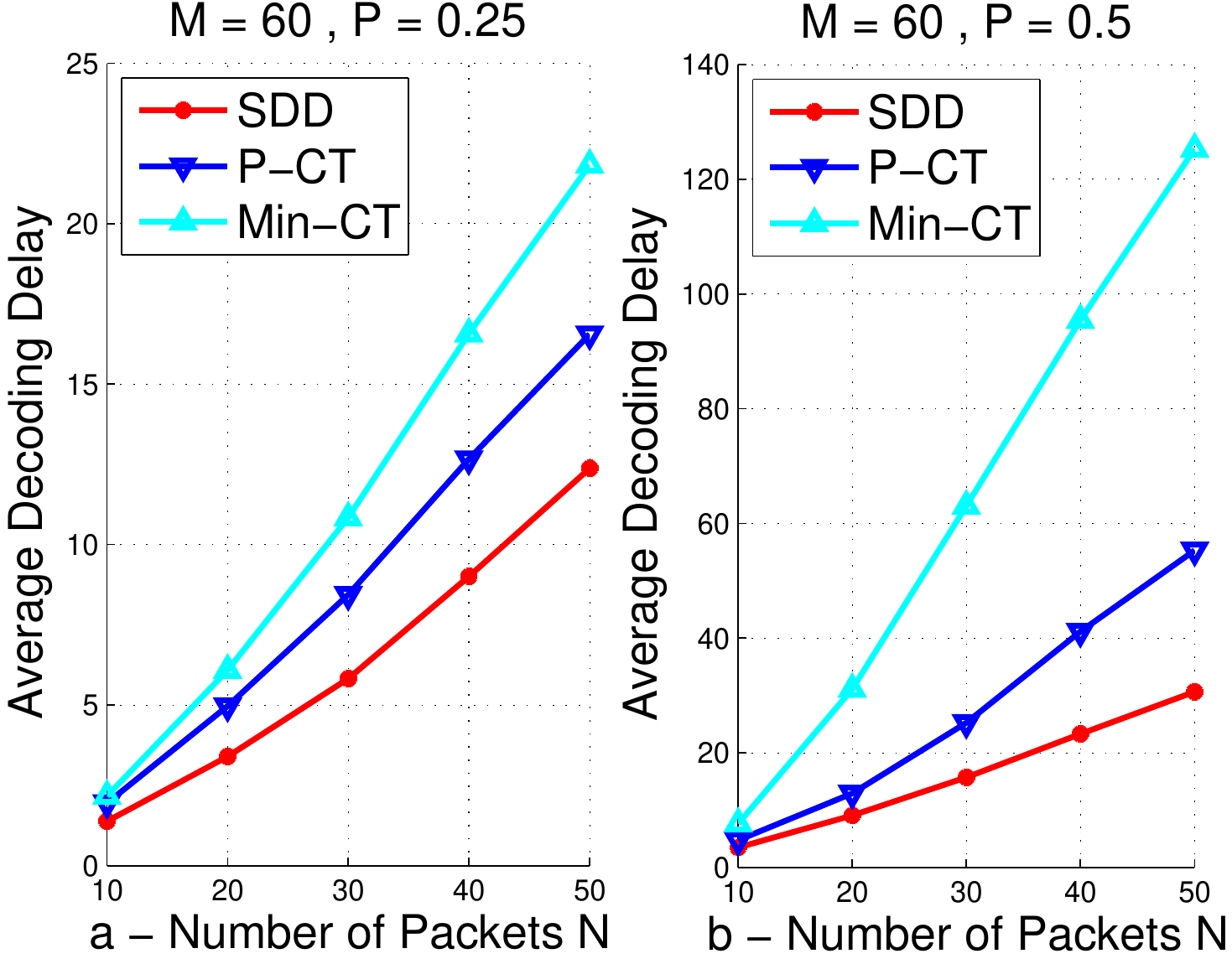}\\
  \caption{Mean decoding delay for G-IDNC versus number of packets $N$.}\label{fig:N2}
\end{figure}

\begin{figure}[t]
\centering
  \includegraphics[width=1\linewidth]{./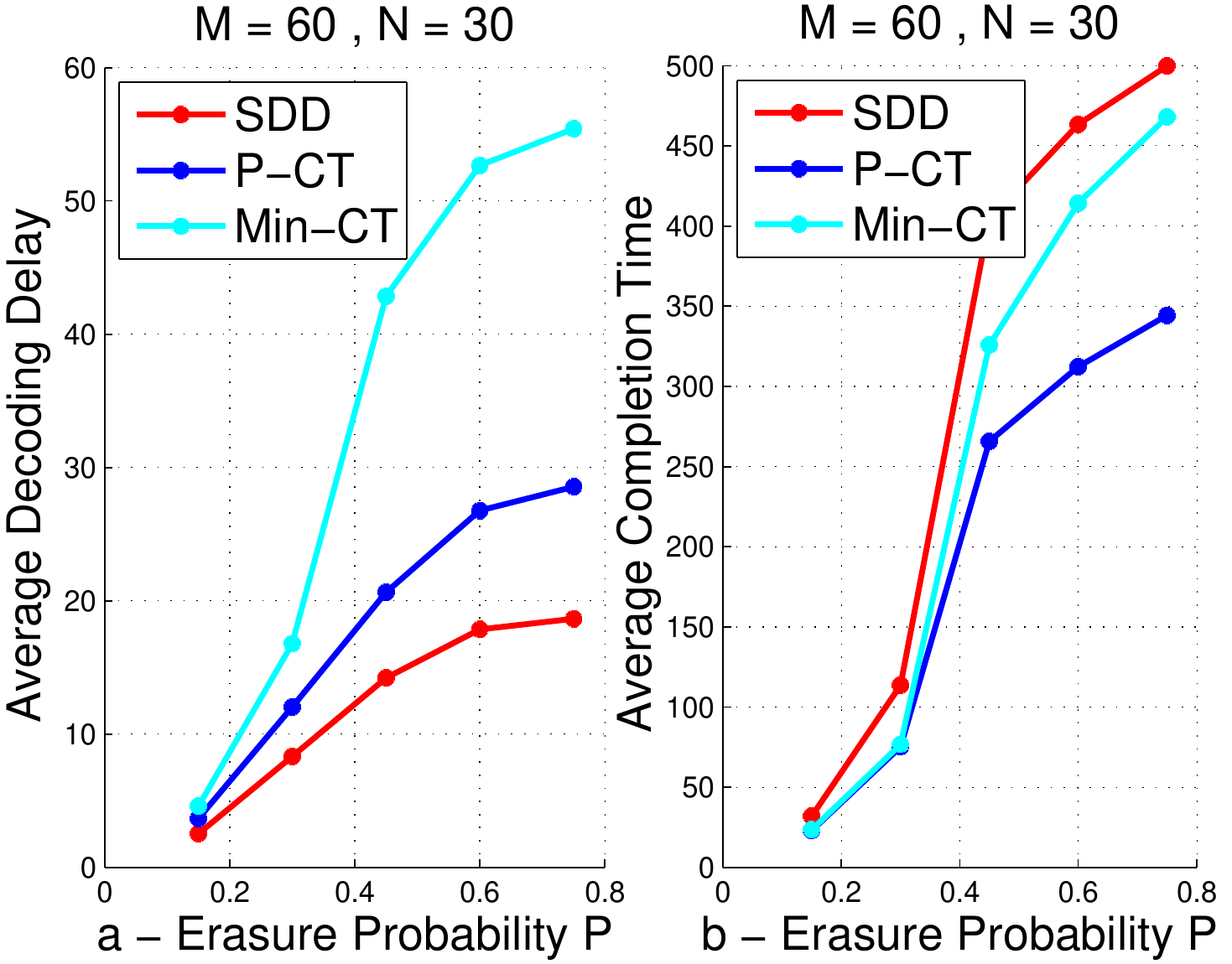}\\
  \caption{Mean delays for G-IDNC versus packet erasure probability $P$.}\label{fig:P}
\end{figure}

\begin{figure}[t]
\centering
  \includegraphics[width=1\linewidth]{./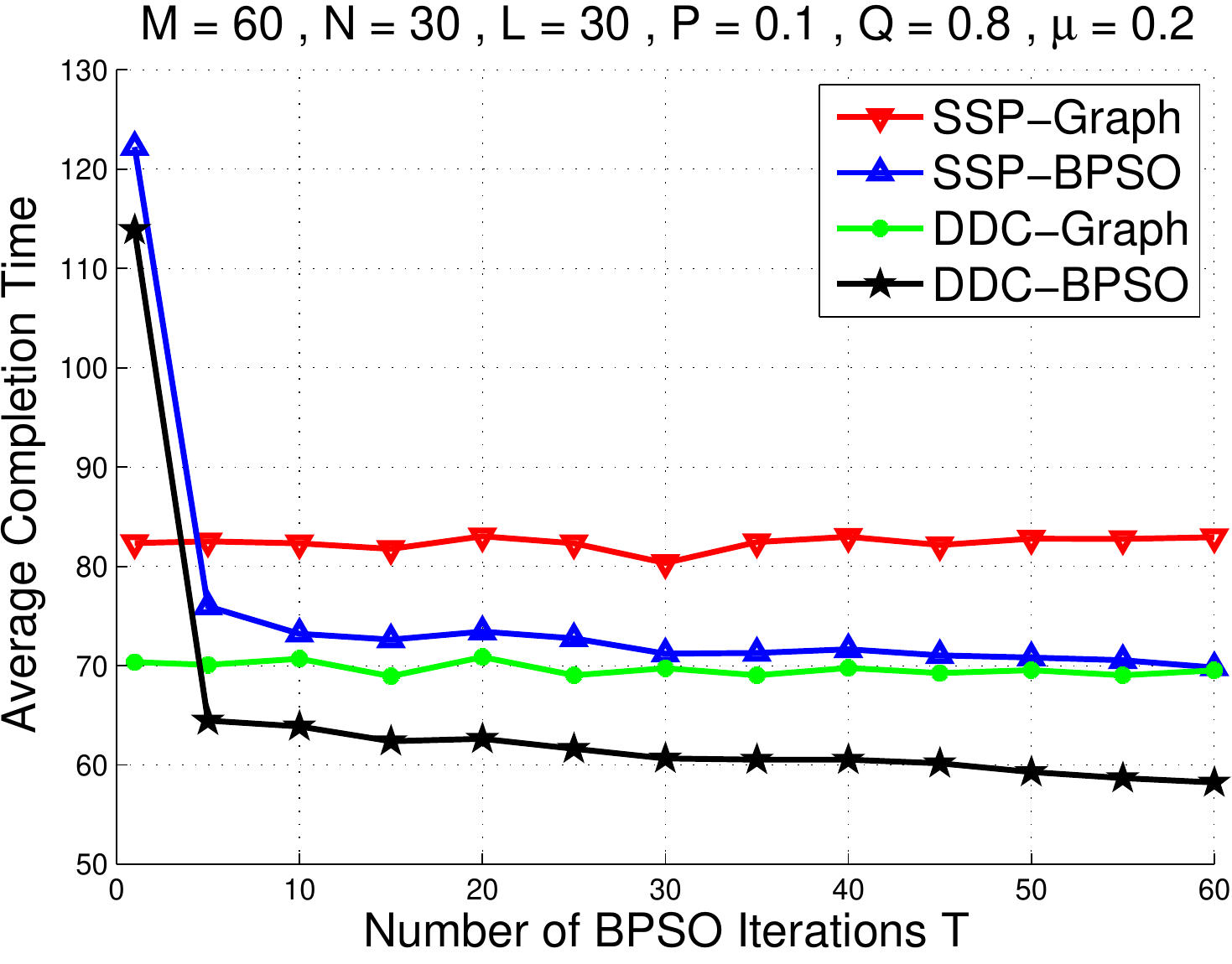}\\
  \caption{Average completion time versus the number of iteration of BPSO $T$.}\label{fig:ITSWAP}
\end{figure}
\begin{figure}[t]
\centering
  \includegraphics[width=1\linewidth]{./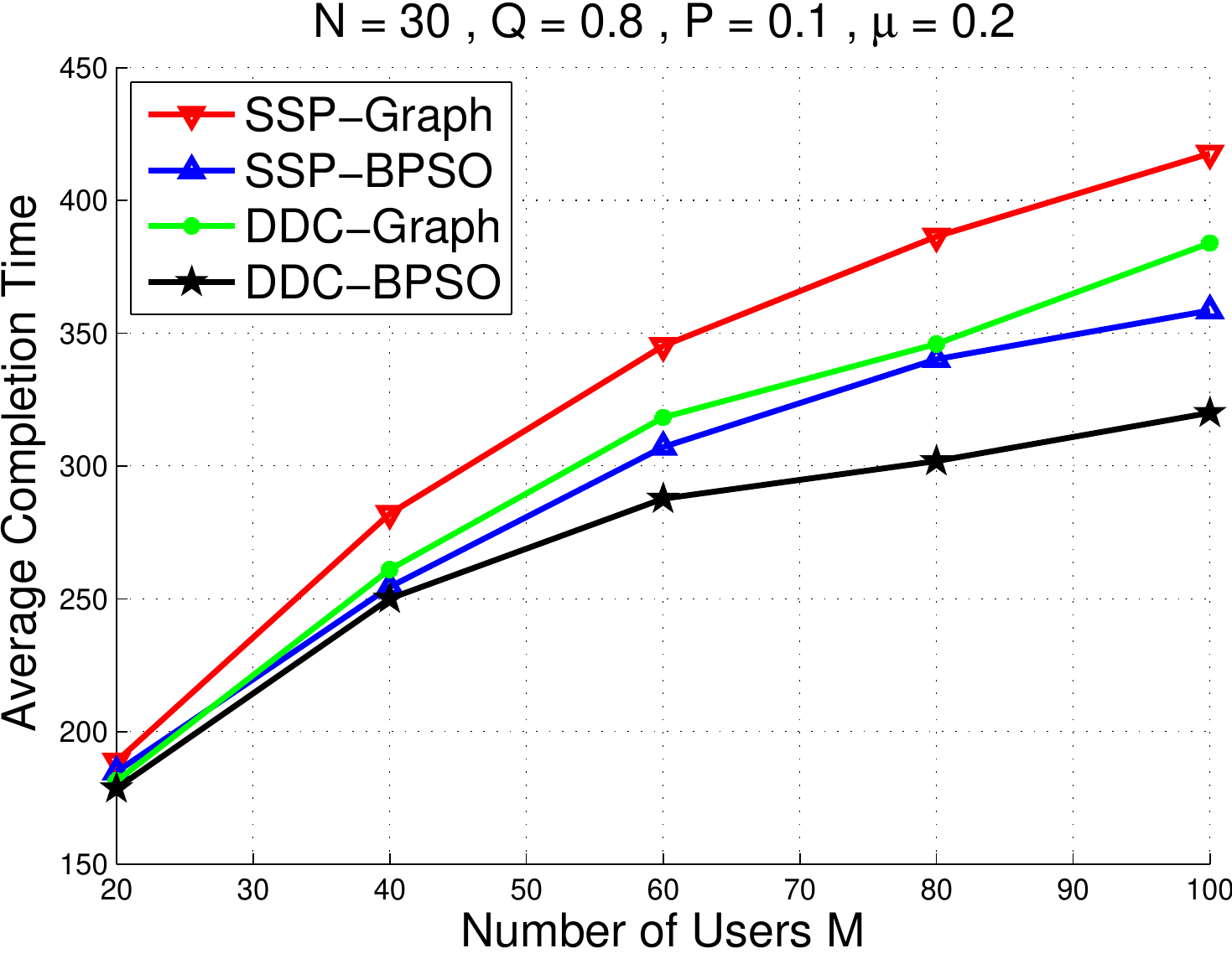}\\
  \caption{Average completion time versus number of users $M$.}\label{fig:MGSWAP}
\end{figure}
\begin{figure}[t]
\centering
  \includegraphics[width=1\linewidth]{./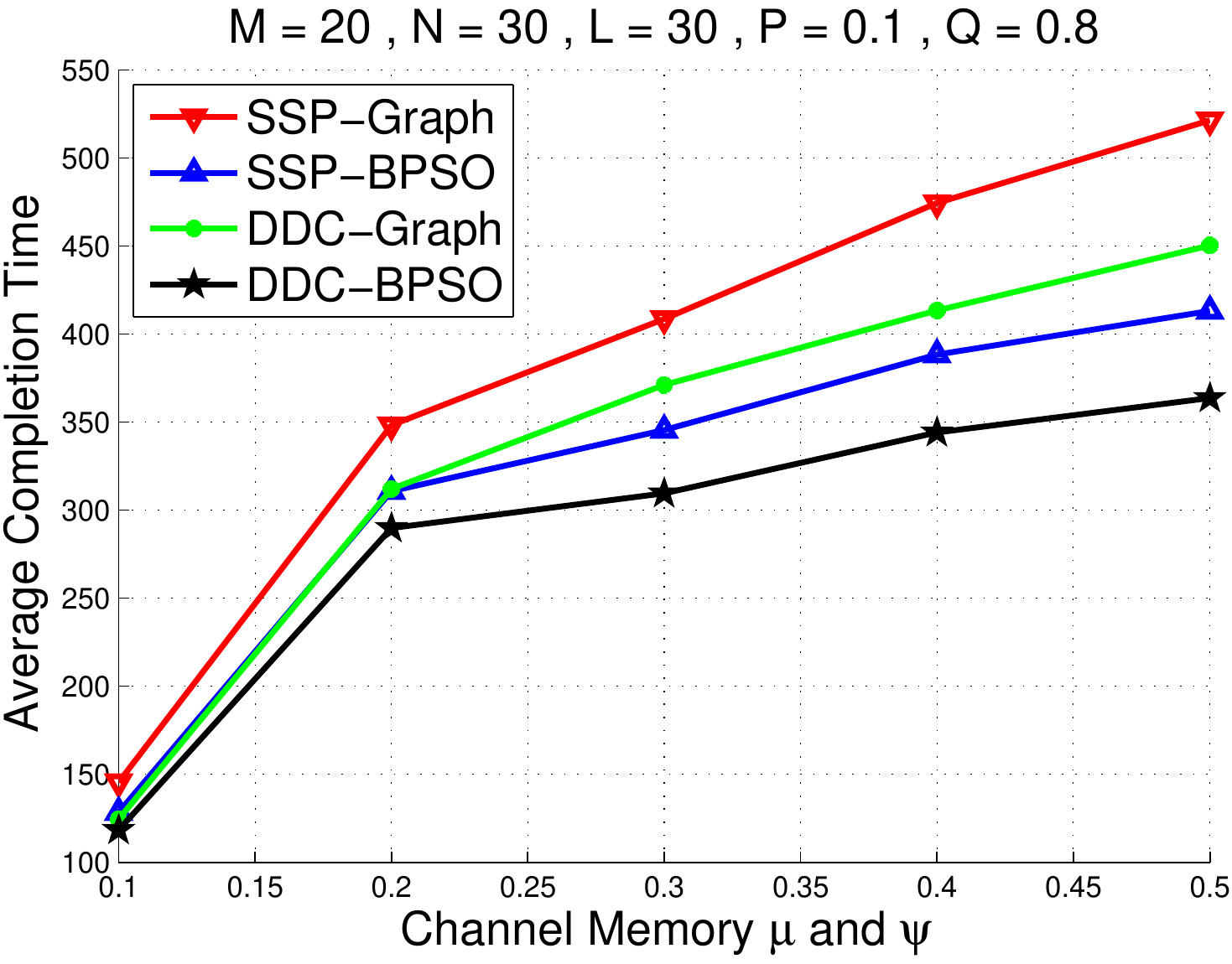}\\
  \caption{Average completion time versus chanel memory $\mu$ and $\psi$.}\label{fig:MUSWAP}
\end{figure}
\begin{figure}[t]
\centering
  \includegraphics[width=1\linewidth]{./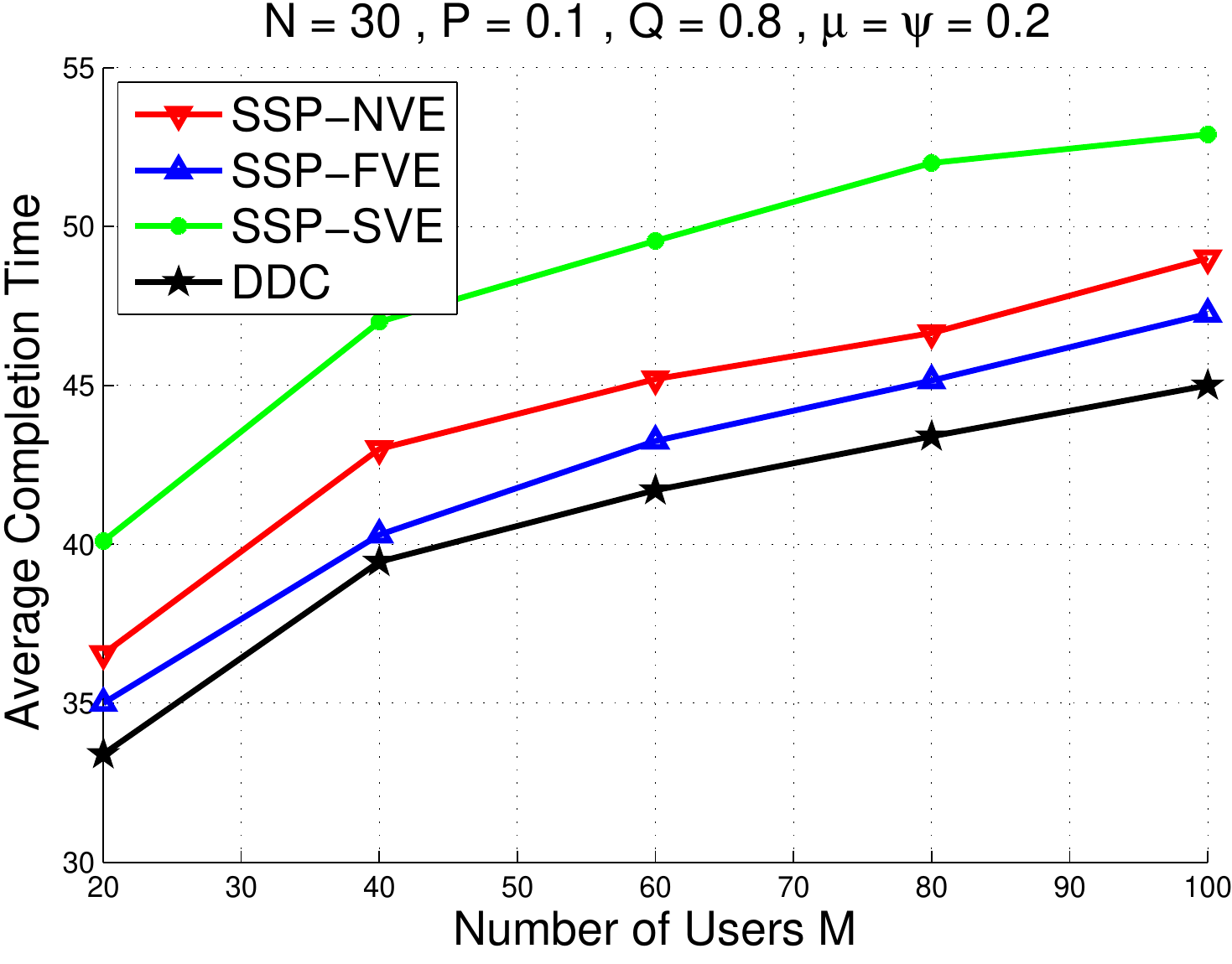}\\
  \caption{Average completion time versus the number of users $M$.}\label{fig:MG}
\end{figure}
\begin{figure}[t]
\centering
  \includegraphics[width=1\linewidth]{./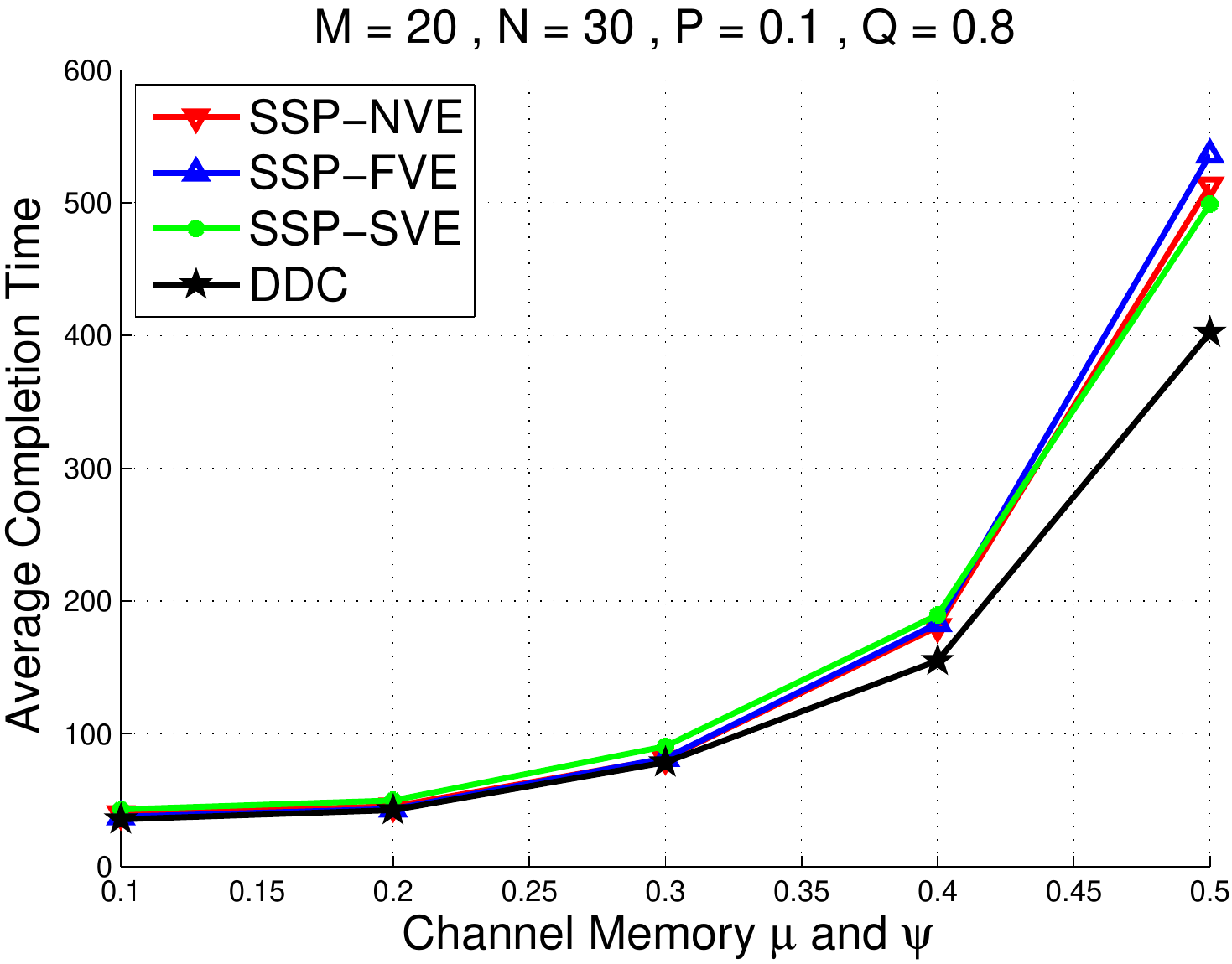}\\
  \caption{Average completion time versus chanel memory $\mu$ and $\psi$.}\label{fig:MUG}
\end{figure}

In this section, we first present the simulation results comparing the completion time and the decoding delay achieved by the different policies to optimize each in perfect feedback and independent erasure channels. We, then, present the completion time achieved by our policy against the blind policies for a lossy feedback and persistent erasure channels. In the first part, we compare, through extensive simulations the sum decoding delay (denoted by SDD) and the completion time achieved by \cite{ref4} (denoted by Min-CT) and the completion time achieved by our algorithm (denoted by P-CT) for perfect feedback while using the policy to reduce the sum decoding delay \cite{ref2} and the policy \cite{ref4} and our policy to reduce the completion time \cite{letterarxiv}. In the second part, we first compare our two heuristics to reduce the completion time for perfect feedback and persistent erasure channels. We, then, compare the completion time achieved by our policy and the blind policies in lossy feedback environment.

In all the simulations, the different delays are computed by frame then averaged over a large number of iterations. We assume that the packet and the feedback erasure probability of all the users change from frame to frame while the average packet erasure probability remain constant. We further assume the symmetric channels for both the forward and backward links. In other words, the erasure probability on the forward and the backward link are the same (the probabilities only and not the channel realizations). 

\fref{fig:M} depicts the comparison of the mean completion achieved by the policy to reduce the sum decoding delay (SDD), \cite{ref4} policy and our one to reduce the completion time (Min-CT and P-CT) for a perfect feedback and independent erasure channels against $M$ for $N=60$ and $P=0.25$ and $P=0.5$ receptively, where $P$ refers to the average packet erasure probability in the independent erasure channels. \fref{fig:M2} illustrates the comparison of the decoding delay for the same inputs. \fref{fig:N} and \fref{fig:N2} depicts the comparison of the aforementioned delay aspects against $N$ for $M=60$ and $P=0.25$ and $P=0.5$ receptively and \fref{fig:P} illustrates this comparison against the erasure probability $P$ for $M=60$ and $N=30$.

\fref{fig:ITSWAP} illustrate the comparison of the completion time achieved by the one layered algorithm (denoted by -Graph) and the BPSO algorithm (denoted by -BPSO) for both our policy to reduce the completion time \cite{confarxiv} using the decoding delay control (denoted by DDC) and the \cite{ref4} policy using the shortest stochastic path (denoted by SSP) for a perfect feedback and persistent erasure channels against the number of iteration $T$ for $M=60$, $N=30$, $L=30$, $P=0.1$, $Q=0.8$ and $\mu=0.2$. \fref{fig:MGSWAP} and \fref{fig:MUSWAP} depicts the same comparison against the number of users and the memory of the channel respectively for $\mu=0.2$($M=60$), $N=30$, $L=30$, $P=0.1$ and $Q=0.8$.

\fref{fig:MG} shows the comparison between the different blind algorithm (denoted by NVE for the pessimist, FVE of the optimist and SVE for the stochastic policy) when using the original formulation of the weights proposed in \cite{ref4} and our multi-layer graph algorithm using the decoding delay control (denoted by DDC) against the number of users $M$ for $N=30$, $P=0.1$, $Q=0.8$, and $\mu=0.2$. \fref{fig:MUG} depicts the same comparison against the channel memory $\mu$ and $\psi$ for $M=60$, $N=30$, $P=0.1$, and $Q=0.8$.

From all the figures, we can clearly see that our proposed completion time algorithm outperforms the completion time policy proposed in \cite{ref4} and the blind policies. Moreover it gives the best agreement among the sum decoding delay and the completion in G-IDNC. The completion time policy offers, in average, the minimum sum of all the delay aspects in all situations.

\fref{fig:M}.a and \fref{fig:N}.a depicts the completion time when applying the sum decoding delay policy, the completion time policy and the our completion time policy against $M$ and $N$ for a low packet erasure probability. We see that the performance of P-CT and Min-CT are very close. Whereas in \fref{fig:M2} and \fref{fig:N2} where the sum decoding delay is computed for the same inputs, the performance of P-CT is much better than Min-CT one.

As the channel conditions become harsher (high packet erasure probability), our policy to reduce the completion time minimize the completion time better than the Min-CT. We can see from \fref{fig:M}.b, \fref{fig:M2}.b, \fref{fig:N}.ab and \fref{fig:N2}.b that P-CT outperforms Min-CT in minimizing both the sum decoding delay and the completion time. \fref{fig:P}.a shows that for $P>0.3$, P-CT achieves a significant improvement in the completion time. This can be explained by the light of the P-CT policy characteristics. In the P-CT policy, the number of the erased packets is estimated using the law of large numbers. This approximation can be effective when the erasure of the channel or the input (number of packets and users) are high enough.

From \fref{fig:ITSWAP}, we clearly see that the BPSO algorithm achieves a lower completion time for a low number of iteration ($5$ iterations). This algorithm have a fixed complexity unlike the multi-layer graph algorithm which have a worst complexity of $MN$ ($60$ in the figure). This fixed complexity property along with its performance to effectively reduce the completion time make this algorithm more reliable and more suitable to be used.

\fref{fig:MG} and \fref{fig:MUG} shows that our algorithm to control the completion time using the decoding delay in lossy feedback scenario outperforms largely the blind algorithms specially as the channel is more and more persistent ($\mu$ increases). The optimist approach (FVE) achieves a reasonable degradation for a low channel persistence. However, this degradation become more severe as the memory of the channel increases. The pessimist approach (NVE) can be seen as the complementary of the optimist approach since it perform better in high memory channel and less in near independent channel. The stochastic approach (SVE) achieves an intermediate result and degrades as the channel is near independent or highly correlated.

\section{Conclusion} \label{sec:conclusion}

In this paper, we studied the effect of controlling the decoding delay to reduce the completion time below its currently best known solution for persistent channel. We first derived the decoding-delay-dependent completion time expressions. We then employed these expressions to design two new heuristic. The first decides on coded packets by reducing the probability of decoding delay increase on a new layering of the IDNC graph based on user criticality in increasing the overall completion time and the second uses binary optimization with multi-layer objective function that preserves prioritization. We, then, extended our study to the limited feedback environment. Simulation results showed that this new algorithm achieves a lower mean completion time and mean decoding delay compared to the best known completion time heuristics, with significant gains in harsh erasure scenarios.

\appendices
\numberwithin{equation}{section}

\section{Auxiliary Theorems}

In this appendix we provide auxiliary theorem that we will use to proof Theorem \ref{sffrh}. The following theorem provides the expression of the probability to be in a state of the channel given that the channel was in a particular state at a previous time instant.
\begin{theorem}
Let $(X_n)_{n\geq1}$ be a two state ($x$ and $y$) Markov chain, with $P_{tr_{ x\rightarrow y}}$ and $P_{tr_{y\rightarrow x}}$ the transition probability from state $x$ to $y$ and $y$ to $x$, respectively. Let $\mu = (1-P_{tr_{ x\rightarrow y}}-P_{tr_{y\rightarrow x}})$ be the memory of the chain. 

Define $f(n) = \mathds{P}(X_n = y | X_{n^0} = x),~ \forall~ n \geq n^0$. We have:
\begin{align}
f(n) = P_{tr_{ x\rightarrow y}} \times \langle \sum_{i=0}^{n-n^0-1} \mu^{i} \rangle 
\end{align}
where 
\begin{align}
\langle \sum_{x \in X} (.) \rangle = 
\begin{cases}
\sum_{x \in X} (.) &\text{if } X \neq \varnothing \\
0  &\text{if } X = \varnothing.
\end{cases} 
\end{align}
\label{regj}
\end{theorem}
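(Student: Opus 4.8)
The plan is to reduce the requested closed form to a first-order linear recurrence in the index $n$ and then solve that recurrence explicitly. To lighten the notation I would temporarily write $a = P_{tr_{x\rightarrow y}}$ and $b = P_{tr_{y\rightarrow x}}$, so that $\mu = 1 - a - b$. The starting point is the base case: directly from the definition, $f(n^0) = \mathds{P}(X_{n^0} = y \mid X_{n^0} = x) = 0$, which is exactly what the empty-sum convention $\langle \sum_{i=0}^{-1} \mu^i\rangle = 0$ prescribes at $n = n^0$.

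The key step is to establish the recurrence by conditioning on the state of the chain one step earlier. Applying the law of total probability at time $n-1$ and then invoking the Markov property, and using the fact that a two-state chain satisfies $\mathds{P}(X_{n-1} = x \mid X_{n^0} = x) = 1 - f(n-1)$, I would obtain
\begin{equation}
f(n) = \bigl(1 - f(n-1)\bigr)\,a + f(n-1)\,(1-b) = a + (1-a-b)\,f(n-1) = a + \mu\, f(n-1).
\end{equation}
This collapses the multi-step transition probability into a clean first-order recursion $f(n) = a + \mu\,f(n-1)$ driven by the one-step transition rate $a$ and damped by the memory factor $\mu$.

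Finally, I would solve this recurrence with the initial value $f(n^0) = 0$, either by induction on $m = n - n^0$ or simply by unrolling: successive substitution yields $f(n) = a\,(1 + \mu + \mu^2 + \cdots + \mu^{\,n-n^0-1}) = a\sum_{i=0}^{n-n^0-1}\mu^i$, which is precisely the claimed expression $P_{tr_{x\rightarrow y}} \times \langle \sum_{i=0}^{n-n^0-1} \mu^{i}\rangle$ once the empty-sum convention is applied at the boundary. The computation is entirely routine; the only points that need a little care are the correct use of the Markov property when reducing the two-step dependence to the recurrence, and the bookkeeping of the empty-sum convention at $n = n^0$ (as opposed to treating it as a separate degenerate case). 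An equivalent but more calculation-heavy alternative would be to diagonalize the $2\times 2$ transition matrix, whose eigenvalues are $1$ and $\mu$, and read off the $(x,y)$ entry of its $(n-n^0)$-th power; I would prefer the recurrence since it avoids the spectral bookkeeping and handles the $\mu = 1$ (memoryless-complement) limit transparently.
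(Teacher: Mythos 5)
Your proof is correct. Note that the paper itself does not contain a proof of this theorem: it simply defers to Appendix A of \cite{refjournal}, so there is no in-paper argument to compare yours against. Your self-contained derivation is the standard one and it works: the base case $f(n^0)=0$ matches the empty-sum convention; the conditioning step at time $n-1$ correctly uses the Markov property together with the two-state identity $\mathds{P}(X_{n-1}=x \mid X_{n^0}=x) = 1-f(n-1)$ (this is where time-homogeneity of the chain, implicit in the constant transition probabilities, is used); and unrolling the recurrence $f(n) = P_{tr_{x\rightarrow y}} + \mu f(n-1)$ yields exactly $P_{tr_{x\rightarrow y}} \sum_{i=0}^{n-n^0-1}\mu^i$. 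Your preference for the recurrence over diagonalizing the transition matrix is also well judged, since the geometric-sum form (rather than the ratio $(1-\mu^{n-n^0})/(1-\mu)$) is precisely what degenerates gracefully when $\mu=1$, and it is the form in which the result is stated and used in the paper (e.g., in the loss-probability expressions of Theorem \ref{sffrh}).
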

\begin{proof}
The proof can be found in \cite{refjournal} Appendix A.
\end{proof}
The following theorem gives the expression of the probability to be in a particular state of the channel given the channel realization for the same time instant. Lets consider the channel defined in \fref{fig:GEC}. For notation simplicity, we will not consider the superscripts in this theorem. Let $X_i(n)$ be a random variable that take the value $1$ if the transmission at time $n$ is erased and $0$ otherwise.
\begin{theorem}
The probability of the state of the channel at time $n$ conditioned by the realization $X_i(n)$ at the same time can be expressed as:
\begin{align}
&\mathds{P}(\mathcal{C}(n)=y|X_i(n)=x) = \\
&\begin{cases}
\cfrac{p\mathcal{P}_G}{p\mathcal{P}_G+q\mathcal{P}_B} &\text{ if } x = 1, y = G\\
\cfrac{(1-p)\mathcal{P}_G}{(1-p)\mathcal{P}_G+(1-q)\mathcal{P}_B}  &\text{ if } x = 0, y = G\\
\cfrac{q\mathcal{P}_B}{p\mathcal{P}_G+q\mathcal{P}_B}  &\text{ if }  x = 1, y = B\\
\cfrac{(1-q)\mathcal{P}_B}{(1-p)\mathcal{P}_G+(1-q)\mathcal{P}_B}  &\text{ if } x = 0, y = B
\end{cases} \nonumber
\end{align}
\label{regj2}
\end{theorem}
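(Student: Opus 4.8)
The plan is to recognize this as a direct application of Bayes' rule, inverting the conditioning between the channel state $\mathcal{C}(n)$ and the erasure indicator $X_i(n)$. All the quantities needed are already supplied by the channel model of \fref{fig:GEC}: the \emph{likelihoods} are $\mathds{P}(X_i(n)=1\mid\mathcal{C}(n)=G)=p$ and $\mathds{P}(X_i(n)=1\mid\mathcal{C}(n)=B)=q$ (hence $\mathds{P}(X_i(n)=0\mid\mathcal{C}(n)=G)=1-p$ and $\mathds{P}(X_i(n)=0\mid\mathcal{C}(n)=B)=1-q$), while the \emph{priors} are the steady-state occupancy probabilities $\mathds{P}(\mathcal{C}(n)=G)=\mathcal{P}_G$ and $\mathds{P}(\mathcal{C}(n)=B)=\mathcal{P}_B$. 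Taking the stationary law as the prior is legitimate because the chain is time-homogeneous, so the state at time $n$ can be assumed distributed according to $\{\mathcal{P}_G,\mathcal{P}_B\}$.

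First I would write, for each pair $(x,y)\in\{0,1\}\times\{G,B\}$,
\begin{equation}
\mathds{P}(\mathcal{C}(n)=y\mid X_i(n)=x)=\frac{\mathds{P}(X_i(n)=x\mid \mathcal{C}(n)=y)\,\mathds{P}(\mathcal{C}(n)=y)}{\mathds{P}(X_i(n)=x)}.
\end{equation}
Next I would expand the denominator by the law of total probability over the two states, which yields $\mathds{P}(X_i(n)=1)=p\,\mathcal{P}_G+q\,\mathcal{P}_B$ in the erased case and $\mathds{P}(X_i(n)=0)=(1-p)\,\mathcal{P}_G+(1-q)\,\mathcal{P}_B$ in the non-erased case. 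These two sums are exactly the common denominators appearing in the statement.

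Substituting the matching likelihood and prior into the numerator for each of the four combinations then reproduces the claimed expressions line by line: for $x=1$ the numerator is $p\,\mathcal{P}_G$ (if $y=G$) or $q\,\mathcal{P}_B$ (if $y=B$) over $p\,\mathcal{P}_G+q\,\mathcal{P}_B$, and for $x=0$ it is $(1-p)\,\mathcal{P}_G$ or $(1-q)\,\mathcal{P}_B$ over $(1-p)\,\mathcal{P}_G+(1-q)\,\mathcal{P}_B$. There is no genuine analytical difficulty here; the proof is essentially a bookkeeping exercise in Bayes' theorem. The only step that merits care is the underlying modelling assumption that, conditioned on the channel state at time $n$, the erasure event $X_i(n)$ is independent of the history, so that the likelihoods are precisely $p$ and $q$ and the prior may legitimately be taken as the stationary distribution rather than a transient time-$n$ distribution.
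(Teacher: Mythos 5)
Your proof is correct and follows essentially the same route as the paper's: an application of Bayes' theorem with the stationary probabilities $\mathcal{P}_G,\mathcal{P}_B$ as priors, the state-conditional erasure probabilities $p,q$ as likelihoods, and the law of total probability giving the denominators $p\mathcal{P}_G+q\mathcal{P}_B$ and $(1-p)\mathcal{P}_G+(1-q)\mathcal{P}_B$. Your explicit remark about the conditional-independence modelling assumption is a point the paper leaves implicit, but the argument is otherwise identical.
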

\begin{proof}
We first note that using the total probability theorem we have:
\begin{align}
\mathds{P}(X_i(n)=x) = 
\begin{cases}
p\mathcal{P}_G+q\mathcal{P}_B &\text{ if } x = 1 \\
(1-p)\mathcal{P}_G+(1-q)\mathcal{P}_B &\text{ if } x = 0 
\end{cases}
\end{align}
We now use the Bayes theorem and write:
\begin{align}
&\mathds{P}(\mathcal{C}(n)=y|X_i(n)=x) = \nonumber \\
&\qquad \cfrac{\mathds{P}(\mathcal{C}(n)=y)}{\mathds{P}(X_i(n)=x)}~\mathds{P}(X_i(n)=x|\mathcal{C}(n)=y) 
\end{align}
By simple substitution in the previous expression we obtain:
\begin{align}
&\mathds{P}(\mathcal{C}(n)=y|X_i(n)=x) = \\
&\begin{cases}
\cfrac{p\mathcal{P}_G}{p\mathcal{P}_G+q\mathcal{P}_B} &\text{ if } x = 1, y = G\\
\cfrac{(1-p)\mathcal{P}_G}{(1-p)\mathcal{P}_G+(1-q)\mathcal{P}_B}  &\text{ if } x = 0, y = G\\
\cfrac{q\mathcal{P}_B}{p\mathcal{P}_G+q\mathcal{P}_B}  &\text{ if }  x = 1, y = B\\
\cfrac{(1-q)\mathcal{P}_B}{(1-p)\mathcal{P}_G+(1-q)\mathcal{P}_B}  &\text{ if } x = 0, y = B
\end{cases} \nonumber
\end{align}
\end{proof}

\section{Proof of Theorem 1}

Let us first define $\mathcal{E}_i(t)$ as the cumulative number of transmitted packets from the sender that were erased at user $i$ until time $t$. It is easy to infer that the reception completion event at time $t=C_i(S)$ of a user $i$ will occur when it receives an instantly decodable packet in the $C_i(S)$-th recovery transmission from the sender. Consequently, $\forall~t<=\mathcal{C}_i(S)-1$, the transmission at time $t$ following the schedule $S$ can be one of the following options:
\begin{itemize}
\item The packet can be erased at user $i$ $\Rightarrow$ The transmission will increase $\mathcal{E}_i(t)$ $\left(\mbox{i.e. } \mathcal{E}_i(t)=\mathcal{E}_i(t-1)+1\right)$.
\item The packet can be successfully received by the user $\Rightarrow$ Two cases can occur types:
\begin{itemize}
\item The packet is instantly decodable for user $i$. Note that user $i$ needs to receive $|\mathcal{W}(0)|-1$ of those packets until time $t=C_i(S)-1$ in order to complete its reception by the last missing source packet from the transmitted packet at time $t=C_i(S)$. Consequently, the number of such packets received by user $i$ until time $t=C_i(S)$ is equal to $|\mathcal{W}(0)|$.
\item The packet is either non-innovative or non instantly decodable $\Rightarrow$ This will increase the value of $D_i(S)$ by one each time it occurs until the reception completion for this user.
\end{itemize}
\end{itemize}

Consequently, the number of recovery transmission sent by the sender following schedule $S$ until user $i$ complete its reception of the frame packets (i.e. completion time of user $i$) can be expressed as follows:
\begin{align}
\mathcal{C}_i(S) = |\mathcal{W}_i(0)| + D_i(S) + \mathcal{E}_i(\mathcal{C}_i(S)-1)\;.
\label{cti}
\end{align}

Let $\mathcal{X}_i(t)$ be the number of time instant, from the beginning of the \emph{recovery phase}, until the time $t$, in which the channel was in Good state and let $\mathcal{Y}_i(t)$ be number in which it was in the bad state. Using the limit distribution of the Markov chain, we can write:
\begin{align}
\mathcal{X}_i(t) \approx t\mathcal{P}_{G_i^f} \\
\mathcal{Y}_i(t) \approx t\mathcal{P}_{B_i^f}
\end{align}

$\mathcal{E}_i^g(t)$ and $\mathcal{E}_i^b(t)$ be the number of erased transmission in the Good and Bad state respectively from the beginning of the \emph{recovery phase}, until the time $t$. Let Using the law of large number in each of the states of the Markov chain, we have:
\begin{align}
\mathcal{E}_i^g(t) \approx \mathcal{X}_i(t)p_i^f \approx t\mathcal{P}_{G_i^f}p_i^f \\
\mathcal{E}_i^b(t) \approx \mathcal{Y}_i(t)q_i^f \approx t\mathcal{P}_{B_i^f}q_i^f
\end{align}

For large enough frame size $N$, the completion time $C_i(S)$ would also be large enough and thus $\mathcal{E}_i(\mathcal{C}_i(S)-1)$ can be approximated using the law of large numbers as follows:
\begin{align}
\mathcal{E}_i(\mathcal{C}_i(S)-1) = \mathcal{E}_i(t)^g + \mathcal{E}_i(t)^b \approx \alpha_i(\mathcal{C}_i(S)-1),
\end{align}
where:
\begin{align}
\alpha_i = \cfrac{g_i^fp_i^f+q_i^fb_i^f}{g_i^f+b_i^f}
\end{align}

Substituting the previous expression in \eref{cti} and re-arranging the terms, the completion time for user $i$ can be finally expressed as:
\begin{align}
\mathcal{C}_i(S) \approx \cfrac{|\mathcal{W}_i(0)| + D_i(S) - \alpha_i}{1-\alpha_i}.
\end{align}

Thus, the expression for the overall completion time can be expressed as:
\begin{align}
\mathcal{C}(S) \approx \max_{i\in\mathcal{M}}\left\{\frac{\left|\mathcal{W}_i(0)\right| + D_i(S)-\alpha_i}{1-\alpha_i}\right\}
\end{align}

\section{Proof of Theorem 3}

We first proof that the algorithm, as stated in the original paper, will poorly perform in our system. Let $L$ be the number of particle and $T$ the number of iterations. Assume there is a user $i$ who is missing all the packets (i.e. $\mathcal{H}_i = \varnothing$). Further assume that all users expect of user $i$ received all their packets. Therefore, the only packet combination that can reduce the Wants set of user $i$, is a packet combination where only one packet is included. We will refer to such packet combination as sparse packet combination. For a random initialisation of one of the particle, the probability that a particle will be sparse is $N\left(\cfrac{1}{2}\right)^N$. Thus, the probability that at least one of the $L$ particle is spare is:
\begin{align}
1-\left(1-N\left(\cfrac{1}{2}\right)^N\right)^L
\end{align}

For a large number of packets $N$, with high probability none of the initial value of the $L$ particles will be sparse. For a non-sparse particle, the merit function will be $0$ since no user will be targeted. As a consequence, the update of the particles will be random since almost all direction will result in a non-sparse particle and thus a $0$ merit. Therefore the second iteration of the algorithm can be seen as another initialization of the $L$ particle. The probability to move one particle in a spare configuration after the $T$ iterations of the algorithm is:
\begin{align}
1-\left(1-N\left(\cfrac{1}{2}\right)^N\right)^{L+T}
\end{align}

For a small number of iterations $T$, with high probability, the algorithm will end with a non sparse particles and therefore no update will be made in the system. This process will result in a very poor performance of the overall system. By setting the number of particle equal to the number of packets $L=N$ and using a sparse initialisation of each particle different from the other particles, we can guarantee a decrease in the merit function each time the algorithm is run. Therefore, at each time instant, unless the packet is erased, we can ensure a reduction by at least one packet from the Wants set of user of interest. This conclude to the overall convergence of the system independently of the number of iterations $T$ of the algorithm.

\section{Proof of Theorem 4}

To compute the probability to loose a transmission $e_i(t)$ or the loose a feedback $f_i(t)$ at time instant $t$, lets consider the channel defined in \fref{fig:GEC}. For notation simplicity, we will not consider the superscripts in this theorem unless it is necessary to specify the forward and backward channels. We first compute the following probability:
\begin{align}
\mathds{P}(X_i(n) = 1 |X_i(n^0) = x),
\end{align}
where $X_i(n)$ is a random variable that take the value $1$ if the transmission at time $n$ is erased and $0$ otherwise and $n \geq n_0$. Using the total probability theorem, we write the previous probability as:
\begin{align}
&\mathds{P}(X_i(n)=1|X_i(n^0)=x) =  \nonumber \\
&\mathds{P}(X_i(n)=1|\mathcal{C}(n)=G,X_i(n^0)=x) \nonumber \\
& \hspace{2cm} \times \mathds{P}(\mathcal{C}(n)=G|X_i(n^0)=x) +\nonumber \\
& \mathds{P}(X_i(n)=1|\mathcal{C}(n)=B,X_i(n^0)=x) \nonumber \\
& \hspace{2cm} \times \mathds{P}(\mathcal{C}(n)=B|X_i(n^0)=x) 
\end{align}
By definition of the Markov chain we have:
\begin{align}
&\mathds{P}(X_i(n)=1|\mathcal{C}(n)=G,X_i(n^0)=x) = \nonumber \\ 
& \qquad \mathds{P}(X_i(n)=1|\mathcal{C}(n)=G) = p \\
&\mathds{P}(X_i(n)=1|\mathcal{C}(n)=B,X_i(n^0)=x) = \nonumber \\
& \qquad \qquad \mathds{P}(X_i(n)=1|\mathcal{C}(n)=B) = q
\end{align}
Using the total probability theorem, we can write first term of the previous expression as:
\begin{align}
&\mathds{P}(\mathcal{C}(n)=G|X_i(n^0)=x) = \nonumber \\
&\mathds{P}(\mathcal{C}(n)=G|\mathcal{C}(n^0)=G,X_i(n^0)=x) \nonumber \\
& \hspace{2cm} \times \mathds{P}(\mathcal{C}(n^0)=G|X_i(n^0)=x)  + \nonumber \\
&\mathds{P}(\mathcal{C}(n)=G|\mathcal{C}(n^0)=B,X_i(n^0)=x) \nonumber \\
& \hspace{2cm} \times \mathds{P}(\mathcal{C}(n^0)=B|X_i(n^0)=x) 
\end{align}
We can further reduce the previous expressions:
\begin{align}
&\mathds{P}(\mathcal{C}(n)=G|X_i(n^0)=x) = \nonumber \\
&\mathds{P}(\mathcal{C}(n)=G|\mathcal{C}(n^0)=G)\mathds{P}(\mathcal{C}(n^0)=G|X_i(n^0)=x)  + \nonumber \\
&\mathds{P}(\mathcal{C}(n)=G|\mathcal{C}(n^0)=B)\mathds{P}(\mathcal{C}(n^0)=B|X_i(n^0)=x) 
\end{align}
Similarly, we apply the same development to the second term:
\begin{align}
&\mathds{P}(\mathcal{C}(n)=B|X_i(n^0)=x) = \nonumber \\
&\mathds{P}(\mathcal{C}(n)=B|\mathcal{C}(n^0)=G)\mathds{P}(\mathcal{C}(n^0)=G|X_i(n^0)=x) + \nonumber \\ 
&\mathds{P}(\mathcal{C}(n)=B|\mathcal{C}(n^0)=B)\mathds{P}(\mathcal{C}(n^0)=B|X_i(n^0)=x)
\end{align}

Using Theorem \ref{regj} and Theorem \ref{regj2}, we can express the previous probability according to the value of $x$:
\begin{align}
&\mathds{P}(X_i(n)=1|X_i(n^0)=x) =   \\
& \begin{cases}
&\cfrac{p\mathcal{P}_G}{p\mathcal{P}_G+q\mathcal{P}_B}(p+(q-p)b\sum\limits_{i=0}^{n-n^0-1} \mu^{i})  \nonumber \\ 
& + \cfrac{q\mathcal{P}_B}{p\mathcal{P}_G+q\mathcal{P}_B}(q+(p-q)g\sum\limits_{i=0}^{n-n^0-1} \mu^{i}) \nonumber \\ 
& \qquad \text{ if } x = 1 \nonumber \\
&\cfrac{(1-p)\mathcal{P}_G}{(1-p)\mathcal{P}_G+(1-q)\mathcal{P}_B} (p+(q-p)b\sum\limits_{i=0}^{n-n^0-1} \mu^{i})  \nonumber \\ 
& + \cfrac{(1-q)\mathcal{P}_B}{(1-p)\mathcal{P}_G+(1-q)\mathcal{P}_B}(q+(p-q)g\sum\limits_{i=0}^{n-n^0-1} \mu^{i})   \nonumber \\ 
& \qquad  \text{ if } x = 0
\end{cases}
\end{align}

We now apply our framework to compute the probability to loose a transmission and to loose the feedback. At time $t_i^0$, packet $j^0$ was attempted to user $i$. Therefore, the probability to loose the transmission can be seen as:

\begin{align}
&e_i(t) = \mathds{P}(X_i(n) = 1 |X_i(t_i^0) = x)   = \nonumber \\
&\begin{cases}
\mathds{P}(X_i(n) = 1 |X_i(t_i^0) = 1) \text{ if } f_{ij^0} = 1 \\
\mathds{P}(X_i(n) = 1 |X_i(t_i^0) = 0) \text{ if } f_{ij^0} = 0
\end{cases}
\end{align}

Using the expression derived above, we can write this probability as:
\begin{align}
&e_i = \\
&\begin{cases}
&\cfrac{p_i^fg_i^f}{p_i^fg_i^f+q_i^fb_i^f}(p_i^f+(q_i^f-p_i^f)b_i^f\sum\limits_{i=0}^{t-t_i^{(0)}-1} \mu^{i})  \nonumber \\ 
& + \cfrac{q_i^fb_i^f}{p_i^fg_i^f+q_i^fb_i^f}(q_i^f+(p_i^f-q_i^f)g\sum\limits_{i=0}^{t-n^0-1} \mu^{i}) \nonumber \\ 
& \qquad \text{ if } f_{ij^0} = 1 \nonumber \\
&\cfrac{(1-p_i^f)g_i^f}{(1-p_i^f)g_i^f+(1-q_i^f)b_i^f}  \nonumber \\
& \hspace{2cm} \times  (p_i^f+(q_i^f-p_i^f)b_i^f\sum\limits_{i=0}^{t-t_i^{(0)}-1} \mu^{i})  \nonumber \\ 
& + \cfrac{(1-q_i^f)b_i^f}{(1-p_i^f)g_i^f+(1-q_i^f)b_i^f} \nonumber \\
& \hspace{2cm} \times  (q_i^f+(p_i^f-q_i^f)g_i^f\sum\limits_{i=0}^{t-t_i^{(0)}-1} \mu^{i}) \nonumber \\ 
& \qquad  \text{ if } f_{ij^0} = 0
\end{cases}
\end{align}

At time $t_i^*$ the feedback was successfully received from user $i$. Thus, the probabilities $f_i(t)$ of loosing a feedback from user $i$ at time $t>t_i^*$ can be expressed as:
\begin{align}
&f_i = \mathds{P}(X_i(n) = 1 |X_i(t_i^*) = 0) \nonumber \\
& = \cfrac{(1-p_i^b)g_i^b}{(1-p_i^b)g_i^b+(1-q_i^b)b_i^b} (p_i^b+(q_i^b-p_i^b)b_i^b\sum_{i=0}^{t-t_i^*-1} \psi^{i})  \nonumber \\ 
& + \cfrac{(1-q_i^b)b_i^b}{(1-p_i^b)g_i^b+(1-q_i^b)b_i^b}(q_i^b+(p_i^b-q_i^b)g_i^b\sum_{i=0}^{t-t_i^*-1} \psi^{i})   
\end{align}

Note that we can obtain the expressions derived in \cite{refjournal} by setting $p=0$ and $q=1$:
\begin{align}
&e_i =
\begin{cases}
1-g_i^f\sum\limits_{i=0}^{n-t_i^0-1} \mu^{i}&\text{ if } f_{ij^0} = 1 \\
b_i^f\sum\limits_{i=0}^{n-t_i^0-1} \mu^{i} &\text{ if } f_{ij^0} = 0
\end{cases} \\
&f_i =  b_i^b\sum\limits_{i=0}^{n-t_i^*-1} \psi^{i}
\end{align}

\section{Expressions of Innovative and Finish Probabilities}

Let $\mathcal{K}_{ij}$ be the set of indexes of the frames in which packet $j$ was attempted to user $i$ since the last time the BS received feedback from this user, excluding the current frame. Define $\mathcal{U}_i^d(n)$ as the following:
\begin{align}
\mathcal{U}_i^d(n) = \bigcup_{j \in \mathcal{W}_i(n \times T_f)}\lambda_{ij}(n),~ \forall~ n \in \mathds{N}^+.
\end{align}

Given these definitions, the probability $p_{i,n}(j,t)$ that packet $j$ is innovative for user $i$ can be expressed as:
\begin{align}
&p_{i,n}(j,t) = \langle \prod_{k \in \lambda_{ij}(n^+(t))} e_i(k) \rangle \nonumber\\
& \times \langle \prod_{k \in \mathcal{K}_{ij}} \rangle \left\{ \left( \prod_{s \in \mathcal{U}_i^d(k)} e_i(s)  +  \prod_{s \in \lambda_{ij}(k)} e_i(s)  \right. \right. \nonumber\\
& \left. {} \left. {} \times (1- \prod_{s \in \mathcal{U}_i^d(k) \setminus \lambda_{ij}(k)} e_i(s) )f_i(u_i(k)) \right) \right. \nonumber\\
& \left. {} \times \left( \prod_{s \in \mathcal{U}_i^d(k)} e_i(s)  + (1- \prod_{s \in \mathcal{U}_i^d(k)} e_i(s) )f_i(u_i(k)) \right)^{-1} \right\} 
\end{align}
The probability $p_{i,f}(t)$ that user $i$ successfully received all his primary packets but $\mathcal{W}_i(t) \neq \varnothing$, at time $t$ is the following:
\begin{align}
p_{i,f}(t)  = \prod_{j \in \mathcal{W}_i(t)} \left( 1- p_{i,n}(j,t) \right),
\end{align}
where $u_i(n)=n*T_f-T_u+T_{u_i}$ and
\begin{align}
\langle \prod_{x \in X} (.) \rangle = 
\begin{cases}
\prod\limits_{x \in X} (.) &\text{if } X \neq \varnothing \\
1  &\text{if } X = \varnothing .
\end{cases} 
\end{align} 
\begin{proof}
The proof can be found in \cite{refjournal} Appendix D.
\end{proof}

\section{Proof of Theorem 5}

Let $\mathds{M}(t)$ be the event that $\max_{i\in\mathcal{M}}\left\{\mathcal{C}_i(t)\right\} > \max_{i\in\mathcal{M}}\left\{\mathcal{C}_i(t-1)\right\}$ at time after a transmission $\kappa(t)$ at time $t$. The probability of this event can be expressed as:
\begin{align}
\mathds{P}(\mathds{M}(t))
&=  \mathds{P}\left(\max_{i\in\mathcal{M}}\left\{\mathcal{C}_i(t)\right\} > \max_{i\in\mathcal{M}}\left\{\mathcal{C}_i(t-1)\right\}\right) \nonumber \\
&= 1 - \mathds{P}\left(\max_{i\in\mathcal{M}}\left\{\mathcal{C}_i(t)\right\} = \max_{i\in\mathcal{M}}\left\{\mathcal{C}_i(t-1)\right\}\right).
\end{align}

Users $j \in \mathcal{M} \setminus \mathcal{P}(t)$ are unable to increase $\max_{i\in\mathcal{M}}\left\{\mathcal{C}_i(t)\right\}$ compared to $\max_{i\in\mathcal{M}}\left\{\mathcal{C}_i(t-1)\right\}$ with probability 1, even if they experience a decoding delay. This is true since the set $\mathcal{P}(t)$ is constructed such that it contains all users that have non-zero probabilities of increasing the completion time. According the definition of $\mathcal{C}_i(t)$ in \eref{eq:Ct}, $\forall~i\in\mathcal{M}$, users $i \in \mathcal{P}(t)$ will not increase $\max_{i\in\mathcal{M}}\left\{\mathcal{C}_i(t)\right\}$ after the transmission $\kappa(t)$ only if they do not experience a decoding delay increment in this transmission. Consequently, we get:
\begin{align}
& \mathds{P}\left(\max_{i\in\mathcal{M}}\left\{\mathcal{C}_i(t)\right\} = \max_{i\in\mathcal{M}}\left\{\mathcal{C}_i(t-1)\right\}\right) \nonumber \\
& \qquad = \mathds{P}\left(\max_{i\in\mathcal{P}(t)}\left\{\mathcal{C}_i(t)\right\} = \max_{i\in\mathcal{M}}\left\{\mathcal{C}_i(t-1)\right\}\right) \nonumber \\
& \qquad = \mathds{P} \left(D_i(t)-D_i(t-1)=0, \forall~i\in\mathcal{P}(t)\right) \nonumber \\
& \qquad = \prod_{i \in \mathcal{P}(t)} \mathds{P} \left(D_i(t)-D_i(t-1)=0\right).
\end{align}

According to the analysis done in \cite{refjournal}, the probability of the decoding delay increase for user $i$ is given by the following theorem:
\begin{theorem}
The probability that user $i$ does not experience a decoding delay at time $t$, after the transmission $\kappa$ is:
\begin{align}
&\mathds{P}(d_i(\kappa,t) = 0) \nonumber \\
&= 
\begin{cases}
e_i(t)&  i \in (\widehat{\tau} \cap \overline{F}) \\
e_i(t)+p_{i,f}(t)-e_i(t)p_{i,f}(t) &i \in (\widehat{\tau} \cap F) \\
1&  i \in (\tau \cap \overline{U}) \\
e_i(t)+p_{i,n}(\kappa_i,t)&  i \in (\tau \cap (U \setminus F)) \\
\qquad -e_i(t)p_{i,n}(\kappa_i,t)  \\
e_i(t)+(1-e_i(t))(p_{i,n}(\kappa_i,t)& i \in (\tau \cap F)  \\
\qquad +p_{i,f}(t)) ,
\end{cases}
\end{align}
where $\widehat{\tau}$ is set of users not targeted and having non-empty Wants sets (i.e. $\widehat{\tau}= M_w \setminus \tau$), $\kappa_i$ is the intended packet for user $i$ in the transmission $\kappa$, $F$ is the set of users having all their remaining packets in an uncertain state and $U$ is the set of users having the intended packet for them in an uncertain state. The notation $\overline{X}$ refers to the set complementary to the set $X$.
\end{theorem}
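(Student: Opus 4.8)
The plan is to compute $\mathds{P}(d_i(\kappa,t)=0)$ by conditioning first on whether the current transmission is erased at user $i$, and then, on the non-erased branch, on the (physically past, hence erasure-independent) events that decide whether a received packet actually causes a delay. By the decoding-delay definition, a received packet forces a delay precisely when the user still has a non-empty \emph{real} Wants set and the packet is non-innovative or not instantly decodable for it; an erasure never produces a delay. Writing $\mathds{P}(d_i=0)=e_i(t)+\bigl(1-e_i(t)\bigr)\,\mathds{P}(\text{received packet is harmless})$ reduces the whole statement to evaluating the second factor in each of the five regimes, using the finish event (probability $p_{i,f}(t)$) and the innovation event (probability $p_{i,n}(\kappa_i,t)$) supplied by Appendix E and the erasure probability $e_i(t)$ supplied by Theorem~\ref{sffrh}.

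First I would dispose of the two non-targeted cases ($i\in\widehat\tau$). A non-targeted user never sees an instantly decodable combination, so a received packet is harmless only if the user has in fact already finished. When $i\in\overline F$ at least one remaining packet is certainly wanted, so the user is surely unfinished, the harmless probability is $0$, and $\mathds{P}(d_i=0)=e_i(t)$ (Case~1). When $i\in F$ all remaining packets are uncertain, the user is finished with probability $p_{i,f}(t)$, the harmless probability equals $p_{i,f}(t)$, and one obtains $e_i(t)+(1-e_i(t))p_{i,f}(t)$ (Case~2).

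Next I would treat the three targeted cases ($i\in\tau$), where by the IDNC construction the combination is instantly decodable for $i$, so the only remaining way to cause a delay is for the packet to be non-innovative (the user already possesses $\kappa_i$) while the user is not yet finished. When $i\in\overline U$ the intended packet is certainly wanted, hence innovative, so a received packet is always useful and $\mathds{P}(d_i=0)=1$ (Case~3). When $i\in U\setminus F$ the intended packet is uncertain but the user is certainly unfinished, so a received packet is harmless exactly when innovative, giving $e_i(t)+(1-e_i(t))p_{i,n}(\kappa_i,t)$ (Case~4). Finally, when $i\in F$ both the innovation and the finish events are in play; the crucial observation is that they are \emph{mutually exclusive}, since a finished user holds every packet and therefore cannot find $\kappa_i$ innovative. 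Hence the harmless probability is the plain sum $p_{i,n}(\kappa_i,t)+p_{i,f}(t)$, yielding $e_i(t)+(1-e_i(t))\bigl(p_{i,n}(\kappa_i,t)+p_{i,f}(t)\bigr)$ (Case~5).

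The main obstacle is justifying the factorization $\mathds{P}(d_i=0)=e_i(t)+(1-e_i(t))\cdot(\cdots)$: one must argue that the erasure of the \emph{current} transmission is independent of the actual reception states that govern the finish and innovation events, even though the Gilbert--Elliott channel carries memory. The resolution is that $e_i(t)$ from Theorem~\ref{sffrh} already absorbs all conditioning on the observed history, while $p_{i,f}(t)$ and $p_{i,n}(\kappa_i,t)$ describe the residual uncertainty about strictly past transmissions; these two pieces of randomness refer to disjoint time instants and therefore factor. The only other delicate point is the mutual-exclusivity argument in Case~5, which is exactly what prevents an inclusion--exclusion correction term from appearing in the harmless probability.
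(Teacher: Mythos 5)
The paper itself offers no internal proof of this statement: it is imported verbatim from \cite{refjournal}, with the proof deferred to that reference's Appendix A. So your derivation can only be judged against the logic the formula demands, and on that score your case analysis is the right one and reproduces all five branches. An erasure never causes a delay; a non-targeted user whose remaining packets are not all uncertain is surely unfinished, so any reception harms it, giving $e_i(t)$; a non-targeted user in $F$ is harmless on reception exactly when it has in fact finished, giving $e_i(t)+(1-e_i(t))p_{i,f}(t)$; a targeted user whose intended packet is certainly wanted receives an innovative, instantly decodable packet, so the probability is $1$ whether or not the transmission is erased; a targeted user in $U\setminus F$ has at least one certainly wanted packet and hence cannot have finished, so harmlessness reduces to innovation alone; and your observation that the finish and innovation events are mutually exclusive (a finished user holds every packet, so no packet can be innovative for it) is precisely what makes the Case~5 expression the plain sum $p_{i,n}(\kappa_i,t)+p_{i,f}(t)$ with no inclusion--exclusion correction. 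That matches the statement exactly.

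The one step where your justification is wrong as stated is the factorization $\mathds{P}(d_i=0)=e_i(t)+(1-e_i(t))\,\mathds{P}(\text{harmless})$. You argue that the current erasure and the finish/innovation events ``refer to disjoint time instants and therefore factor.'' For the Gilbert--Elliott channel this is false: the channel state is a Markov process with memory $\mu_i>0$, so the erasure at time $t$ is correlated with the erasures at past attempt times that determine innovation and finish (for instance, the event that all past attempts of $\kappa_i$ were erased raises the posterior probability of the bad state now, hence of an erasure at $t$). Disjointness of time indices does not imply independence in a Markov chain; the factorization is exact only in the memoryless case $\mu_i=0$. What saves your proof is that this conditional-independence treatment of erasures at distinct times, given the feedback history summarized by Theorem~\ref{sffrh}, is the standing approximation of the entire framework---the innovation and finish probabilities of Appendix~E and the clique weights derived in Appendix~F are built from exactly the same products---and the theorem statement itself cannot hold exactly without it. So the correct justification is an explicit appeal to that modeling assumption, not to disjointness of time instants; with that repair, your argument is the intended one.
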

\begin{proof}
The proof can be found in \cite{refjournal} Appendix A.
\end{proof}

In other words, the completion time does not increase only if all the users having the completion time so far do not experience a decoding delay increase in the next transmission. Using the expression of the decoding delay increase, the probability of event $\mathds{M}(t)$ to occur can be expressed as follows:
\begin{align}
\mathds{P}&(\mathds{M}(t)) = 1 - \prod_{i \in \mathcal{P}(t)} \mathds{P} (d_i(\kappa,t) = 0) \nonumber \\
& \hspace{1cm} = 1 -  \prod_{i \in (\mathcal{P}(t) \cap \widehat{\tau} \cap \overline{F}) } e_i(t) \nonumber \\
&\prod_{i \in (\mathcal{P}(t) \cap \widehat{\tau} \cap F) } e_i(t)+p_{i,f}(t)-e_i(t)p_{i,f}(t) \nonumber \\
&\prod_{i \in (\mathcal{P}(t) \cap \tau \cap (U \setminus F)) } e_i(t)+p_{i,n}(\kappa_i,t)-e_i(t)p_{i,n}(\kappa_i,t) \nonumber \\
&\prod_{i \in (\mathcal{P}(t) \cap \tau \cap F) } e_i(t)+(1-e_i(t))(p_{i,n}(\kappa_i,t)+p_{i,f}(t)).
\end{align}

From the expressions of the completion time increment, we can express the minimum completion time problem as a maximum weight clique problem, such that:
\begin{align}
\kappa^{*}&(t) = \underset{\kappa(t) \in \mathcal{G}}{\text{argmin}}\left\{ \mathds{P}(\mathds{M}(t))\right\} \nonumber \\
&= \underset{\kappa(t) \in \mathcal{G}}{\text{argmin}} \left\{ 1 -  \prod_{i \in (\mathcal{P}(t) \cap \widehat{\tau} \cap \overline{F}) } e_i(t) \right. \nonumber \\
& \left. {} \prod_{i \in (\mathcal{P}(t) \cap \widehat{\tau} \cap F) } e_i(t)+p_{i,f}(t)-e_i(t)p_{i,f}(t) \right. \nonumber \\
& \left. {} \prod_{i \in (\mathcal{P}(t) \cap \tau \cap (U \setminus F)) } e_i(t)+p_{i,n}(\kappa_i,t)-e_i(t)p_{i,n}(\kappa_i,t) \right. \nonumber \\
& \left. {} \prod_{i \in (\mathcal{P}(t) \cap \tau \cap F) } e_i(t)+(1-e_i(t))(p_{i,n}(\kappa_i,t)+p_{i,f}(t)) \right\}   \nonumber \\
&= \underset{\kappa(t) \in \mathcal{G}}{\text{argmax}} \left\{\prod_{i \in (\mathcal{P}(t) \cap \widehat{\tau} \cap \overline{F}) } e_i(t) \right. \nonumber \\
& \left. {} \prod_{i \in (\mathcal{P}(t) \cap \widehat{\tau} \cap F) } e_i(t)+p_{i,f}(t)-e_i(t)p_{i,f}(t) \right. \nonumber \\
& \left. {} \prod_{i \in (\mathcal{P}(t) \cap \tau \cap (U \setminus F)) } e_i(t)+p_{i,n}(\kappa_i,t)-e_i(t)p_{i,n}(\kappa_i,t) \right. \nonumber \\
& \left. {} \prod_{i \in (\mathcal{P}(t) \cap \tau \cap F) } e_i(t)+(1-e_i(t))(p_{i,n}(\kappa_i,t)+p_{i,f}(t)) \right\}   .
\end{align}
Since the function $log(.)$ is an increasing function, then the problem can be expressed as:
\begin{align}
&\kappa^{*}(t) = \underset{\kappa(t) \in \mathcal{G}}{\text{argmin }} \text{log}\left\{ \mathds{P}(\mathds{M}(t))\right\} \nonumber \\
&= \underset{\kappa(t) \in \mathcal{G}}{\text{argmax}} \left\{ \sum_{i \in (\mathcal{P}(t) \cap \widehat{\tau} \cap \overline{F}) } \text{log}( e_i(t)) \right. \nonumber \\
& \left. {} +\sum_{i \in (\mathcal{P}(t) \cap \widehat{\tau} \cap F) } \text{log}( e_i(t) +p_{i,f}(t)-e_i(t)p_{i,f}(t)) \right. \nonumber \\
& \left. {} +\sum_{i \in (\mathcal{P}(t) \cap \tau \cap (U \setminus F)) } \text{log}( e_i(t)+p_{i,n}(\kappa_i,t)-e_i(t)p_{i,n}(\kappa_i,t)) \right. \nonumber \\
& \left. {} +\sum_{i \in (\mathcal{P}(t) \cap \tau \cap F) } \text{log}( e_i(t)+(1-e_i(t))(p_{i,n}(\kappa_i,t)+p_{i,f}(t))) \right\}   .
\end{align}

If user $i$ does not have all its wanted packets in an uncertain state, then the probability than he finished receiving all its wanted packet is $0$. Thus $p_{i,f}(t)=0,~\forall~ i \in \overline{F}$. Therefore, we have:
\begin{align}
&\sum_{i \in (\mathcal{P}(t) \cap \widehat{\tau} \cap \overline{F}) } \text{log}( e_i(t)) \nonumber\\ 
&= \sum_{i \in (\mathcal{P}(t) \cap \widehat{\tau} \cap \overline{F}) } \text{log}( e_i(t)+p_{i,f}(t)-e_i(t)p_{i,f}(t)).
\label{q1}
\end{align}

Using \eref{q1}, the expression below can be simplified as:
\begin{align}
& \sum_{i \in (\mathcal{P}(t) \cap \widehat{\tau} \cap \overline{F}) } \text{log}( e_i(t)) \nonumber \\
& +\sum_{i \in (\mathcal{P}(t) \cap \widehat{\tau} \cap F) } \text{log}( e_i(t)+p_{i,f}(t)-e_i(t)p_{i,f}(t))  \nonumber \\
& = \sum_{i \in (\mathcal{P}(t) \cap \widehat{\tau} )} \text{log}( e_i(t)+p_{i,f}(t)-e_i(t)p_{i,f}(t)) .
\end{align}

It is clear that $(U \setminus F) \subseteq \overline{F}$, then $p_{i,f}(t)=0,~\forall~ i \in (U \setminus F)$. We then obtain:
\begin{align}
& \sum_{i \in (\mathcal{P}(t) \cap \tau \cap (U \setminus F)) } \text{log}( e_i(t)+p_{i,n}(\kappa_i,t)-e_i(t)p_{i,n}(\kappa_i,t)) \nonumber \\
&= \sum_{i \in (\mathcal{P}(t) \cap \tau \cap (U \setminus F)) }\text{log}( e_i(t) \nonumber \\
&  +(1-e_i(t))(p_{i,n}(\kappa_i,t)+p_{i,f}(t))).
\label{q2}
\end{align}

Therefore, using \eref{q2}, we can simplify the below expression:
\begin{align}
& \sum_{i \in (\mathcal{P}(t) \cap \tau \cap (U \setminus F)) } \text{log}( e_i(t)+p_{i,n}(\kappa_i,t)-e_i(t)p_{i,n}(\kappa_i,t))  \nonumber \\
& +\sum_{i \in (\mathcal{P}(t) \cap \tau \cap F) } \text{log}( e_i(t)+(1-e_i(t))(p_{i,n}(\kappa_i,t)+p_{i,f}(t)))   \nonumber \\
&= \sum_{i \in (\mathcal{P}(t) \cap \tau \cap U) } \text{log}( e_i(t)+(1-e_i(t))(p_{i,n}(\kappa_i,t)+p_{i,f}(t))).
\end{align}

Given the above simplifications, the maximum weight clique problem can be written as follows:
\begin{align}
&\kappa^{*}(t) = \underset{\kappa(t) \in \mathcal{G}}{\text{argmin }} \text{log}\left\{ \mathds{P}(\mathds{M}(t))\right\} \nonumber \\
&  = \underset{\kappa(t) \in \mathcal{G}}{\text{argmax}} \left\{ \sum_{i \in (\mathcal{P}(t) \cap \widehat{\tau} ) } \text{log}( e_i(t)+p_{i,f}(t)-e_i(t)p_{i,f}(t)) \right. \nonumber \\
& \left. {} +\sum_{i \in (\mathcal{P}(t) \cap \tau \cap U) } \text{log}( e_i(t) +(1-e_i(t))(p_{i,n}(\kappa_i,t)+p_{i,f}(t))) \right\}  \nonumber \\  
&  = \underset{\kappa(t) \in \mathcal{G}}{\text{argmin}} \left\{ \sum_{i \in (\mathcal{P}(t) \cap \tau ) } \text{log}( e_i(t)+p_{i,f}(t)-e_i(t)p_{i,f}(t)) \right. \nonumber \\
& \left. {} -\sum_{i \in (\mathcal{P}(t) \cap \tau \cap U) } \text{log}( e_i(t)+(1-e_i(t))(p_{i,n}(\kappa_i,t)+p_{i,f}(t))) \right\}  .
\end{align}

Note that if the targeted packet $\kappa_i$ of user $i$ is not an uncertain packet (i.e. $i \in \overline{U}$), then the packet is certainly innovative. Since that this user have at least one certain wanted packet then he surely still needs packets. In other words, we have $i \in \overline{U} \Rightarrow p_{i,n}(\kappa_i,t)=1$ and  $p_{i,f}(t)=0$. We write the following expression as:
\begin{align}
&\sum_{i \in (\mathcal{P}(t) \cap \tau) } \text{log}( e_i(t)+(1-e_i(t))(p_{i,n}(\kappa_i,t)+p_{i,f}(t))) \nonumber \\
&= \sum_{i \in (\mathcal{P}(t) \cap \tau \cap U} \text{log}( e_i(t)+(1-e_i(t))(p_{i,n}(\kappa_i,t)+p_{i,f}(t))) \nonumber \\
&+ \sum_{i \in (\mathcal{P}(t) \cap \tau \cap \overline{U}} \text{log}( e_i(t)+(1-e_i(t))(p_{i,n}(\kappa_i,t)+p_{i,f}(t))) \nonumber \\
&= \sum_{i \in (\mathcal{P}(t) \cap \tau \cap U} \text{log}( e_i(t)+(1-e_i(t))(p_{i,n}(\kappa_i,t)+p_{i,f}(t))).
\end{align}

Giving all the above simplifications, we now express the maximum weight clique problem as:
\begin{align*}
&\kappa^{*}(t) = \underset{\kappa(t) \in \mathcal{G}}{\text{argmin }} \text{log}\left\{ \mathds{P}(\mathds{M}(t))\right\} \nonumber \\ 
&= \underset{\kappa(t) \in \mathcal{G}}{\text{argmin}} \left\{ \sum_{i \in (\mathcal{P}(t) \cap \tau ) } \text{log}( e_i(t)+p_{i,f}(t)-e_i(t)p_{i,f}(t)) \right. \nonumber \\
& \left. {} -\sum_{i \in (\mathcal{P}(t) \cap \tau ) } \text{log}( e_i(t)+(1-e_i(t))(p_{i,n}(\kappa_i,t)+p_{i,f}(t))) \right\}  \nonumber \\  
&= \underset{\kappa(t) \in \mathcal{G}}{\text{argmin}} \nonumber \\ 
& \left\{ \sum_{i \in (\mathcal{P}(t) \cap \tau ) } \text{log}\left(\cfrac{ e_i(t)+p_{i,f}(t)-e_i(t)p_{i,f}(t)}{e_i(t)+(1-e_i(t))(p_{i,n}(\kappa_i,t)+p_{i,f}(t))} \right) \right\} \nonumber \\  
\end{align*}
\begin{align}
&= \underset{\kappa(t) \in \mathcal{G}}{\text{argmax}} \nonumber \\  
&\left\{ \sum_{i \in (\mathcal{P}(t) \cap \tau ) } \text{log}\left(\cfrac{e_i(t)+(1-e_i(t))(p_{i,n}(\kappa_i,t)+p_{i,f}(t))}{ e_i(t)+p_{i,f}(t)-e_i(t)p_{i,f}(t)}\right) \right\} \nonumber \\ 
&= \underset{\kappa(t) \in \mathcal{G}}{\text{argmax}} \nonumber \\  
&\left\{ \sum_{i \in (\mathcal{P}(t) \cap \tau ) } \text{log}\left(1 + \cfrac{(1-e_i(t))p_{i,n}(\kappa_i,t)}{ e_i(t)+p_{i,f}(t)-e_i(t)p_{i,f}(t)}\right) \right\} \nonumber \\ 
&= \underset{\kappa(t) \in \mathcal{G}}{\text{argmax}} \left\{ \sum_{i \in (\mathcal{P}(t) \cap \tau ) } \text{log}\left( 1 + \cfrac{p_{i,n}(\kappa_i,t)}{ \cfrac{e_i(t)}{1-e_i(t)}+p_{i,f}(t)}\right) \right\}.
\end{align}

In other words, the transmission $\kappa(t)$ than can satisfy the critical criterion can be selected using a maximum weight clique problem in which the weight of each vertex $v_{ij}$ in $\mathcal{P}(t)$ can be expressed as:
\begin{align}
w_{ij}^* = \text{log}\left( 1 + \cfrac{p_{i,n}(j,t)}{ \cfrac{e_i(t)}{1-e_i(t)}+p_{i,f}(t)}\right).
\end{align}

\bibliographystyle{IEEEtran}
\bibliography{references}

\end{document}